\newcounter{phase}[algorithm]
\newlength{\phaserulewidth}
\newcommand{\setphaserulewidth}{\setlength{\phaserulewidth}}
\newcommand{\phase}[1]{%
  \vspace{-1.25ex}
  \Statex\leavevmode\llap{\rule{\dimexpr\labelwidth+\labelsep}{\phaserulewidth}}\rule{\linewidth}{\phaserulewidth}
  \Statex\strut\refstepcounter{phase}\textit{Phase~\thephase~--~#1}
  \vspace{-1.25ex}\Statex\leavevmode\llap{\rule{\dimexpr\labelwidth+\labelsep}{\phaserulewidth}}\rule{\linewidth}{\phaserulewidth}}
\title{Scheduling with Machine Conflicts} 
\titlerunning{Scheduling with Machine Conflicts} 
\author{Moritz Buchem}{School of Business and Economics, Maastricht University, The Netherlands }{m.buchem@maastrichtuniversity.nl}{https://orcid.org/0000-0002-1590-346X}{}
\author{Linda Kleist}{Department of Computer Science, TU Braunschweig, Germany}{l.kleist@tu-bs.de}{https://orcid.org/0000-0002-3786-916X}{}
\author{Daniel Schmidt genannt Waldschmidt}{Institut f\"ur Mathematik, TU Berlin, Germany}{dschmidt@math.tu-berlin.de}{https://orcid.org/0000-0002-9331-445X}{}
\authorrunning{M.\,Buchem, L.\,Kleist and D.\,Schmidt genannt Waldschmidt} 
\keywords{Scheduling, machine conflicts, 
	approximation algorithm, NP-hard, inapproximability, star forest, bipartite graph. 
}
\newcommand{\nonNested}{basic\xspace}
\newcommand{\I}{\ensuremath{\mathcal{I}}\xspace}
\newcommand{\qtime}{system time\xspace}
\newcommand{\ScM}{\textsc{smc}$(G,\J)$\xspace}
\newcommand{\ScMP}{\textsc{smc}\xspace}
\newcommand{\ScMPropiP}{\textsc{smc-prop-i}\xspace}
\newcommand{\ScMPropiiP}{\textsc{smc-prop-ii}\xspace}
\newcommand{\ScMPropiiiP}{\textsc{smc-prop-iii}\xspace}
\newcommand{\ScMIDSh}{\textsc{smc-id-short}$(G,n,(\lb{},p,\rb{}))$\xspace}
\newcommand{\IdenticalP}{\textsc{smc-Id}\xspace}
\newcommand{\Unit}{\textsc{smc-Unit}$(G,n)$\xspace}
\newcommand{\UnitP}{\textsc{smc-Unit}\xspace}
\newcommand{\OPT}{\textsc{opt}\xspace}
\newcommand{\lb}[1]{\ensuremath{\harpR{b_{#1}}}\xspace}
\newcommand{\rb}[1]{\ensuremath{\harp{b_{#1}}}\xspace} 
\newcommand{\J}{\ensuremath{\mathcal{J}}\xspace}
\newcommand{\jcm}{\textsc{jc-intra}\xspace}
\newcommand{\jct}{\textsc{jc-inter}\xspace}
\def\NP{\ensuremath{\text{NP}}\xspace}
\def\P{\ensuremath{\text{P}}\xspace}
\newcommand{\harp}[1]{\mathpalette\harpoonvec{#1}}
\newcommand{\harpR}[1]{\mathpalette\harpoonvecREV{#1}}
\newcommand{\harpvecsignREV}{\scriptscriptstyle{\leftharpoonup}}
\newcommand{\harpoonvecREV}[2]{%
	\ifx\displaystyle#1\doalign{$\harpvecsignREV$}{#1#2}\fi
	\ifx\textstyle#1\doalign{$\harpvecsignREV$}{#1#2}\fi
	\ifx\scriptstyle#1\doalign{\scalebox{.6}[.9]{$\harpvecsignREV$}}{#1#2}\fi
	\ifx\scriptscriptstyle#1\doalign{\scalebox{.5}[.8]{$\harpvecsignREV$}}{#1#2}\fi
}
\newcommand{\harpvecsign}{\scriptscriptstyle\rightharpoonup}
\newcommand{\harpoonvec}[2]{%
	\ifx\displaystyle#1\doalign{$\harpvecsign$}{#1#2}\fi
	\ifx\textstyle#1\doalign{$\harpvecsign$}{#1#2}\fi
	\ifx\scriptstyle#1\doalign{\scalebox{.6}[.9]{$\harpvecsign$}}{#1#2}\fi
	\ifx\scriptscriptstyle#1\doalign{\scalebox{.5}[.8]{$\harpvecsign$}}{#1#2}\fi
}
\newcommand{\doalign}[2]{%
	{\vbox{\offinterlineskip\ialign{\hfil##\hfil\cr#1\cr$#2$\cr}}}%
}
\Crefname{observation}{Obs}{Obs}
\Crefname{theorem}{Thm}{Thm}
\Crefname{corollary}{Cor}{Cor}
\crefname{observation}{Observation}{Observations}
\crefname{theorem}{Theorem}{Theorems}
\crefname{corollary}{Corollary}{Corollaries}
\begin{document}

\maketitle

\begin{abstract}
We study the scheduling problem of makespan minimization while taking machine conflicts into account. Machine conflicts arise in various settings, e.g., shared resources for pre- and post-processing of tasks or spatial restrictions.

In this context, each job has a blocking time before and after its processing time, i.e., three parameters. We seek for conflict-free schedules in which the blocking times of no two jobs intersect on conflicting machines. Given a set of jobs, a set of  machines, and a graph representing  machine conflicts, the problem \textsc{SchedulingWithMachineConflicts}~(\ScMP), asks for a conflict-free schedule of minimum makespan.

We show that, unless $\P=\NP$, \ScMP on $m$ machines does not allow for a $\mathcal{O}(m^{1-\varepsilon})$-approximation algorithm  for any $\varepsilon>0$, even in the case of identical jobs and every choice of fixed positive parameters, including the unit case. 
Complementary, we provide approximation algorithms when a suitable collection of independent sets is given. Finally, we present polynomial time algorithms to solve the problem for the case of unit jobs on special graph classes. Most prominently, we  solve it for bipartite graphs by using structural insights for  conflict graphs of star forests.

\end{abstract}

\section{Introduction}
Distributing tasks smartly is a challenge we face in numerous settings, ranging from every day life to optimization of industrial processes. Often these assignments must satisfy additional requirements. Such requirements may result from using shared resources and prohibit simultaneous processing of certain tasks. Such conflicts may arise in diverse settings. Surprisingly, and to the best of our knowledge, scheduling with non-trivial machine conflicts has not been investigated previously.

In this work, we generalize the well-studied scheduling problem of makespan minimization on identical parallel machines by taking machine conflicts into account. We are particularly interested in the situation when external pre- and post-processing of jobs is necessary before and after the job is internally processed by a machine. Such pre- and post-processing may be due to shared resources or spatial constraints. 

An example of the first setting can be found in computing problems, in which processors may share different databases or external processors that must be accessed before and after executing tasks on the processor. These external resources can only be accessed by one processor at a time and different processors may share different external resources. 

Another example can be found in logistic and production processes, e.g., when several machines are served by a common robot arm, the transit times should clearly not overlap. An up-to-date example of spatial conflicts occurs in pandemics when schedulers are faced with potentially infectious jobs which should keep sufficient distance to each other, e.g., in testing or vaccination centers. Similarly, spatial  conflicts play a crucial role when jobs may have private information or data that should not be shared; e.g.,  the interrogation of suspects in multiple rooms.


\subparagraph{The problem.}
 \textsc{SchedulingWithMachineConflicts} (\ScMP) is a scheduling problem in which jobs on conflicting machines are processed such that certain blocking intervals of their processing time do not overlap. The objective is to minimize the makespan.

An instance of $\ScMP$ is defined by a set of jobs $\mathcal{J}$ and a conflict graph $G=(V,E)$ on a set of machines $V$ where  two machines~$i$ and $i'$ are \emph{in conflict} if and only if $\{i,i'\} \in E$. 
In contrast to classical scheduling problems, each job $j$ has three parameters~$(\lb{j},p_j,\rb{j})$, where $\lb{j}$ and $\rb{j}$ denote the first and second \emph{blocking time} of $j$, respectively, and $p_j$ denotes its \emph{processing time}. Together they constitute the \emph{\qtime} $q_j=\lb{j}+p_j+\rb{j}$; note that the order $\lb{j},p_j,\rb{j}$ must be maintained.
We seek schedules in which the blocking times of no two jobs on conflicting machines intersect.
Formally, a \emph{(conflict-free) schedule} $\Pi$ is an assignment of jobs to machines and starting times such that
\begin{itemize}
	\item  for each point in time, every machine executes at most one job,
	\item for every edge $\{i,i'\} \in E$ and two jobs $j,j'\in \J$ assigned to
	machines $i$ and $i'$, respectively,  the intervals of the  blocking times of $j$ and $j'$ do not overlap interiorly in time.
\end{itemize}
In particular, every job $j$ has a starting time $S_j^\Pi$, and a completion time  $C_j^\Pi=S_j^\Pi+q_j$. We say a job is \emph{running} at every point in time in the open interval $(S_j^\Pi,C_j^\Pi)$.
 We define the \emph{makespan} of $\Pi$ as $\|\Pi\|:=\max_{j\in\J}C_j^\Pi$ and seek for a schedule with minimum makespan, i.e., the objective is
$\min_\Pi \|\Pi\|$.

Throughout this paper, we use $n:=|\mathcal{J}|$ and $m:=|V|$ to refer to the number of jobs and machines, respectively. 
We highlight that, by definition, all schedules are non-pre\-emptive.

\subsection{Our Contribution and Organization}
We formalize the concept of machine conflicts by introducing a conflict graph. The problem \ScMP generalizes the classical problem $P||C_{\max}$ of makespan minimization on parallel identical machines and models various situations in which machines may not process certain parts of jobs in parallel.

In \cref{sec:Non-nested}, we consider instances with  \emph{long blocking times}.  In this case, \ScMP reduces to identifying a maximum independent set of the conflict graph in a first step and then minimizing the makespan on the identified machines in a second step. This connection to the maximum independent set problem implies that there exists no approximation algorithm for \ScMP on general graphs
  with a performance guarantee in $O(m^{1-\varepsilon})$ for any $\varepsilon>0$ (\cref{theorem:hardness_identical_long_blocking_time}). 
If a maximum or approximate maximum independent set for a graph $G$ is given, approximation algorithms of $P||C_{\max}$ can be exploited to find exact or approximate solutions for \ScMP on $G$ and jobs with long blocking times (\cref{cor:nonNested_polytime,theorem:ApprIndepSetLong}).

In \cref{sec:IdenticalNested}, we strengthen these insights by linking \ScMP with identical jobs (\IdenticalP) for all fixed parameters and \emph{short blocking times} to the problem of finding an appropriate collection of disjoint independent sets. Similar to the case of long blocking times, we provide inapproximability results for \IdenticalP with short blocking times (\cref{theorem:hardness_identical_short_blocking_time}) and approximation algorithms when such a collection of independent sets is at hand (\cref{thm:approximation_identical_short_max,cor:approximation_identical_short_approxmax}).

 In \Cref{sec:unit}, we consider \ScMP with unit jobs (\UnitP), i.e., $\lb{}=p=\rb{}=1$. Motivated by  the inapproximability result for \UnitP on general graphs by  \cref{theorem:hardness_identical_short_blocking_time}, we focus on special graph classes.
For complete graphs, we show that the problem can be reduced to using two machines and hence can be solved efficiently (\cref{lem:UNITcomplete}).
Most interestingly, we present a polynomial time algorithm to compute optimal schedules on bipartite graphs. 
Bipartite graphs are of special interest, because for every point in time, the set of active machines induces a bipartite graph. Hence, our insights can be understood as   \emph{local optimality criteria} of schedules for all graphs.
Our results are based on structural insights for stars and identifying a spanning star forest with special properties for each bipartite graph. 
All algorithms have a running time polynomial in the size of $G$ and $\log(n)$.


\subsection{Related Work}
Our problem \ScMP generalizes the classical scheduling problem of \emph{makespan minimization on parallel identical machines}, also denoted by $P||C_{\max}$.
There are two possible options to view this generalization because $P||C_{\max}$ is equivalent to $\ScMP$ 
\begin{itemize}
	\item if the blocking times of all jobs vanish, i.e.,  $\lb{j}=\rb{j}=0$ for all $j\in\J$, or
	\item  if the edge set of the conflict graph is the empty set.
\end{itemize}
For a constant number of machines, $P||C_{\max}$ is weakly NP-hard, while it is strongly NP-hard when $m$ is part of the input~\cite{GJ79}. Graham~\cite{G66,G69} introduced list scheduling algorithms to obtain the first constant approximation algorithms for this problem. Improved approximation guarantees have been achieved by a fully polynomial time approximation scheme (FPTAS) when $m$ is constant~\cite{Sah76} and a polynomial time approximation scheme (PTAS) when $m$ is part of the input~\cite{HS87}. In subsequent work, the latter has been improved to efficient polynomial time approximation schemes (EPTAS), we refer to~\cite{AAWY98,CJZ13,Hoch97,Jan10,JKV20}. 
\medskip

\emph{Scheduling with conflict graphs for jobs} has been investigated when conflicts between jobs are specified. To this end,  one distinguishes two types of restrictions: 
\begin{enumerate}
	\item conflicting jobs cannot be scheduled on the same machine (\jcm), or 
	\item conflicting jobs cannot be processed concurrently on different machines (\jct).
\end{enumerate}
For \jcm, Bodlaender and Jansen~\cite{BJ93} 
show 
NP-hardness even for unit time jobs on  bipartite graphs and co-graphs.
Bodlaender et al.~\cite{BJW94} present approximation algorithms for  bipartite graphs and graphs with bounded treewidth. 
Das and Wiese~\cite{DW17} provide a PTAS for identical machines where the conflict graph is a collection of cliques.

\jct has been studied under various names: mutual exclusion scheduling problem (MES),  scheduling with conflicts and scheduling with agreements where an agreement graph is the complement of a conflict graph.
Baker and Coffman~\cite{BC96} showed that MES is NP-hard for general graphs and a fixed number of $m\geq 3$ machines. Furthermore, Even et al.~\cite{EHKR09} showed that it is even APX-hard. MES is closely related to finding a proper vertex coloring of $G$ with the minimum number of colors such that each color appears at most $m$ times. The computational complexity of MES has been investigated for various graph classes, see~\cite{BC96, BF05,BJ95, dW97,G09, H93}. 
For bipartite graphs, MES remains NP-hard~\cite{BJ95}, while it becomes polynomial time solvable for a fixed number of machines~\cite{BC96,BJ95, H93}. Even et al.~\cite{EHKR09} consider \jct for non unit jobs on two machines with few job types and show that it can be solved (approximately) when $p_{j} \in \{1,2,3\}$. When $p_{j} \in \{1,2,3,4\}$, \jct is APX-hard even for bipartite conflict graphs~\cite{EHKR09}. For complements of bipartite graphs, Bendraouche and Boudhar~\cite{BB12} showed NP-hardness for a restricted number of job types. Mohabeddine and Boudhar~\cite{MB19} showed that \jct is NP-hard on complements of trees, while it is solvable in polynomial time for complements of caterpillars or cycles.
 \medskip
 
 \emph{Scheduling with pre- and post-processing} has been considered in different models in the literature. Two of these models are the master-and-slave problem introduced by Kern and Nawijn~\cite{KN91} and termed by Sahni~\cite{Sah96} and the problem of scheduling jobs with segmented self-suspension with a single suspension component introduced by Rajkumar et al.~\cite{RSL88}. In these problems pre- and post-processing operations must be executed on (master) machines in the correct order per job with a necessary amount of time in between these operations spent either on a slave machine or in self-suspension. Chen et al.~\cite{CHHMB19} discuss the relation between these two problems and present old and new approximation results for different special cases.

\section{Long Blocking Times and the Maximum Independent Set Problem}\label{sec:Non-nested}
In this section, we identify several special cases of \ScMP which reduce to the problem of (classical) makespan minimization on a maximum independent set of identical machines. The common denominator of these cases is the property that each job has a "long blocking time". These long blocking times lead to so-called \emph{\nonNested} schedules in which jobs are on conflicting machines do not run in parallel. To this end, we introduce some notation in a general form in view of the next sections.

\subparagraph{(Maximum) induced $c$-colorable subgraph.}
For a graph $G=(V,E)$ and $c\in\mathbb{N}_{\geq 1}$,
a \emph{(maximum) induced $c$-colorable subgraph}, or short \emph{(maximum) $c$-IS}, of $G$ is a set of $c$ disjoint independent sets $\I_1,\ldots,\I_c \subseteq V$ (whose union has maximum cardinality). We denote the cardinality of a maximum $c$-IS of $G$ by $\alpha_c(G)$. Note that we also write $\alpha_c$ if $G$ is clear from the context.
Clearly, a $1$-IS is an independent set. It is well known that, unless $P=NP$, there exists no $\mathcal{O}(m^{\epsilon-1})$-approximation for any $\epsilon > 0$ for finding a maximum $1$-IS on a graph with $m$ vertices~\cite{Haa99, Z06}. Lund and Yannakakis~\cite{LY93} show that the same inapproximability result can be extended to finding maximum $c$-IS{s} for any $c\in\mathbb{N}_{\geq 1}$. Furthermore, we define a class of schedules that we use later in this section.

\subparagraph{Basic schedules.}
A schedule $\Pi$ is \emph{\nonNested}, if 
for every edge $ii'\in E$ and
for every pair of jobs $ j$ and $j'$ assigned to $i$ and $i'$, respectively, 
their system times are non-overlapping, i.e., 
 $S_j^{\Pi}\leq S_{j'}^{\Pi}$  implies
$ C_j^{\Pi} \leq S_{j'}^{\Pi}$.

\smallskip
We identify three types of instances, in which all feasible schedules are \nonNested.
\begin{restatable}{lemma}{nonNestedLemma}\label{lemma:nonNested}
	For an instance of $\ScM$, every schedule $\Pi$ is \nonNested if any of the three following properties are fulfilled:
	\begin{romanenumerate}
		\item all jobs $j$ are identical $(\lb{j}=\lb{},p_j=p,\rb{j}=\rb{})$ and $(\max\{\lb{},\rb{}\}>p)$, \hfill(\textsc{prop-i})
		\item all jobs $j$ have equal \qtime $(q_j=q)$ and $\lb{j}>p_j$ and $\rb{j}>p_j$, or  \hfill(\textsc{prop-ii})
		\item all jobs $j,j'$ fulfill $\lb{j}>p_{j'}$ and $\rb{j}>p_{j'}$. \hfill(\textsc{prop-iii})
	\end{romanenumerate}
\end{restatable}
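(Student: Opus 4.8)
The plan is a proof by contradiction with the conflict-freeness of $\Pi$. Fix an edge $ii'\in E$ and jobs $j,j'$ on $i,i'$ with $S_j^{\Pi}\le S_{j'}^{\Pi}$; I must show $C_j^{\Pi}\le S_{j'}^{\Pi}$, so assume toward a contradiction that $S_{j'}^{\Pi}<C_j^{\Pi}$. Translating time so that $S_j^{\Pi}=0$ and writing $s:=S_{j'}^{\Pi}\in[0,q_j)$, job $j$ runs through three consecutive half-open phases $[0,\lb{j})$, $[\lb{j},\lb{j}+p_j)$, $[\lb{j}+p_j,q_j)$ --- first blocking, processing, second blocking --- and $s$ lies in exactly one of them. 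In every case I would produce a blocking interval of $j$ and a blocking interval of $j'$ that overlap, contradicting feasibility. (Positivity of the parameters makes all blocking intervals below non-degenerate; and for (\textsc{prop-i}) I would first use the time-reversal symmetry of \ScMP --- reflecting a schedule in time sends parameters $(\lb{},p,\rb{})$ to $(\rb{},p,\lb{})$ and preserves both feasibility and being \nonNested --- to assume w.l.o.g.\ that $\lb{}=\max\{\lb{},\rb{}\}>p$.)

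The two outer phases are easy and property-independent. If $s\in[0,\lb{j})$, the first blocking interval $(s,s+\lb{j'})$ of $j'$ overlaps the first blocking interval $(0,\lb{j})$ of $j$, since $s<\lb{j}$ and $\lb{j'}>0$. If $s\in[\lb{j}+p_j,q_j)$, then $(s,s+\lb{j'})$ overlaps the second blocking interval $(\lb{j}+p_j,q_j)$ of $j$, since $s<q_j$ while $s+\lb{j'}>\lb{j}+p_j$. Hence the crux is the middle phase, where $j'$ starts while $j$ is being processed.

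So assume $s\in[\lb{j},\lb{j}+p_j)$ and split on whether $j'$'s first blocking extends past $j$'s processing. If $s+\lb{j'}>\lb{j}+p_j$, then $(s,s+\lb{j'})$ overlaps $j$'s second blocking interval $(\lb{j}+p_j,q_j)$ and we are done. Otherwise $s+\lb{j'}\le\lb{j}+p_j$, which with $s\ge\lb{j}$ forces $\lb{j'}\le p_j$, and here the properties diverge. Under (\textsc{prop-iii}) the hypothesis gives $\lb{j'}>p_j$, so this branch is vacuous; under (\textsc{prop-i}), after the reflection reduction $\lb{j'}=\lb{}>p=p_j$, so it is again vacuous. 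Under (\textsc{prop-ii}) I would instead chain $p_{j'}<\lb{j'}\le p_j<\rb{j}$ and use the equal-\qtime assumption $q_j=q_{j'}=:q$: the second blocking intervals of $j$ and of $j'$ are $(q-\rb{j},q)$ and $(s+q-\rb{j'},s+q)$, and since $s+q-\rb{j'}\le \lb{j}+p_j-\lb{j'}+q-\rb{j'}=\lb{j}+p_j+p_{j'}<q\le s+q$, the latter interval begins before $q$ and ends no earlier than $q$, so the two intervals overlap.

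The heart of the argument is this last branch under (\textsc{prop-ii}): when $j'$ starts during the processing of $j$, its first blocking need not poke into $j$'s second blocking, so a head-on collision is unavailable; one must exploit that a short first blocking of $j'$ is, through the common \qtime, compensated by a long and early second blocking of $j'$, which then necessarily collides with the second blocking of $j$ --- this is the only step where the equal-\qtime hypothesis is really used. The rest --- cases (\textsc{prop-i}) and (\textsc{prop-iii}), and the reduction to the middle phase --- I expect to go through routinely once the three-phase decomposition is set up (and, for (\textsc{prop-i}), the reflection argument).
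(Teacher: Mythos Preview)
Your proof is correct and follows essentially the same line as the paper's: assume $j'$ starts while $j$ is still running, use conflict-freeness to force $j'$ to start inside $j$'s processing window with its first blocking contained there (yielding $\lb{j'}\le p_j$), and then derive a contradiction in each case. The only cosmetic differences are that for (\textsc{prop-i}) you use a time-reversal reduction to $\lb{}>p$ whereas the paper derives both $\lb{}\le p$ and $\rb{}\le p$ directly, and for (\textsc{prop-ii}) you exhibit the overlap of the two second-blocking intervals explicitly whereas the paper argues via the chain $\lb{j'}>p_{j'}\ge \rb{j}>p_j\ge \lb{j'}$.
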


\begin{proof}
	
Suppose for the sake of a contradiction that there exists two jobs $j$ and $j'$ assigned to machines $i$ and $i'$, respectively, with $ii'\in E$ such that  $j'$ starts while $j$ is processed.
Because their blocking times cannot overlap, we also know that $j'$ starts after the first blocking time of $j$
and the first blocking time of $j'$ ends before the second blocking time of $j$ starts, see \cref{fig:nonNested}. 	Consequently, $\lb{j'}\leq p_j$.
Moreover, $p_{j'}\geq \rb{j}$ if $j'$ ends after $j$; in particular, this holds for equal \qtime{s} as in (i) and (ii).
	
	\begin{figure}[hb]
		\centering
		\includegraphics[page=4]{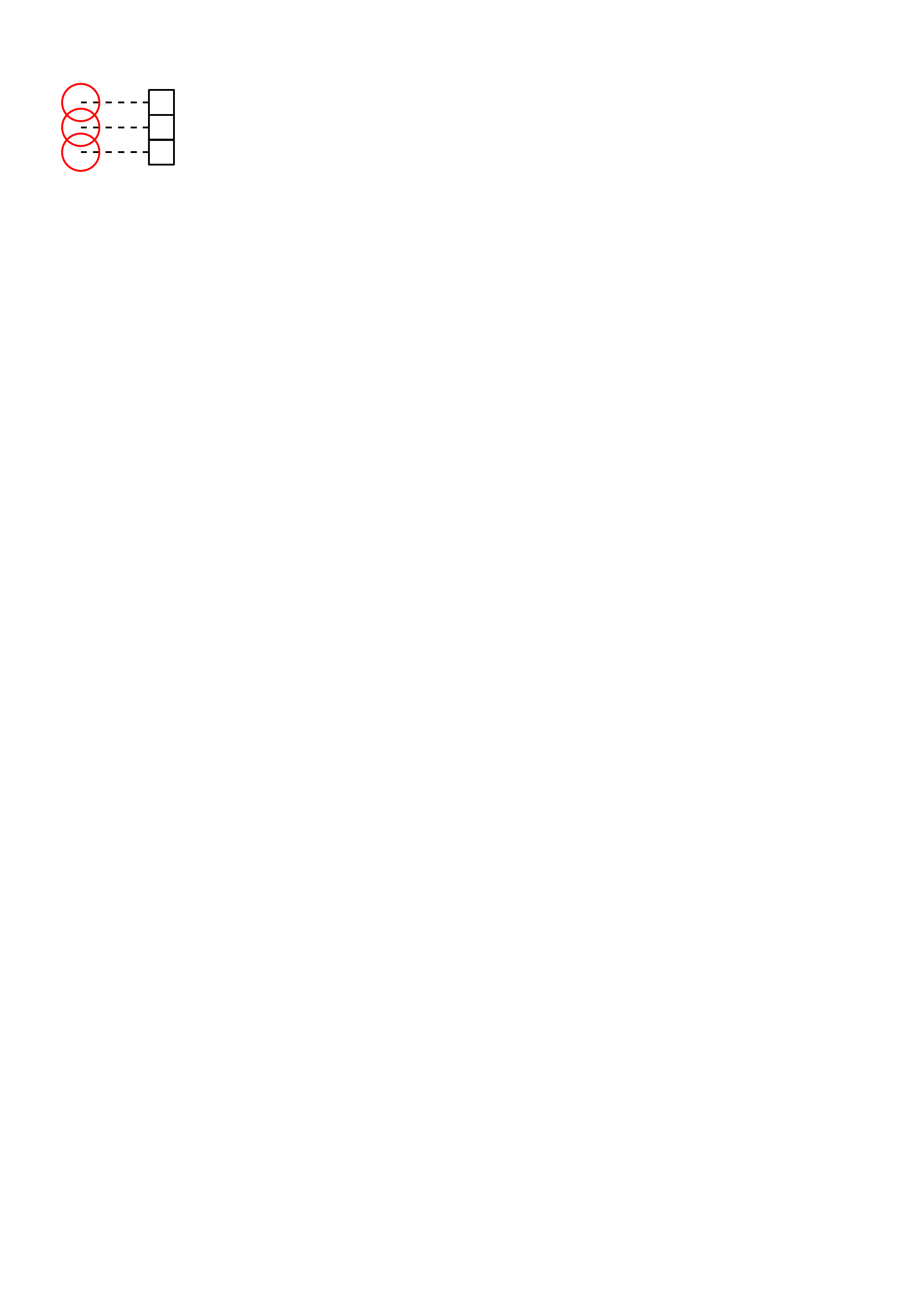}
		\caption{Illustration of the proof of \cref{lemma:nonNested}.}
		\label{fig:nonNested}
	\end{figure}

	\begin{romanenumerate}
		\item Because all jobs have the same parameters, we obtain the contradiction $\lb{}=\lb{j'}\leq p_j=p$ and $p=p_{j'}\geq \rb{j}=\rb{}$.
		\item Using the assumptions $\lb{j'}>p_{j'}$ and $\rb{j}>p_j$,
		we obtain  $\lb{j'}>p_{j'}\geq \rb{j}>p_j\geq \lb{j'}$, a contradiction.
	\item The assumption $\lb{j'}>p_j$ yields an immediate contradiction to $\lb{j'}\leq p_j$.\qedhere 
	\end{romanenumerate}
\end{proof}

We denote \ScMP restricted to instances fulfilling one of the above three properties by \ScMPropiP, \ScMPropiiP and \ScMPropiiiP, respectively. As it turns out, even in the case of identical jobs, these problems are hard to solve since the problem of computing a maximum $1$-IS reduces to them.

\begin{restatable}{theorem}{HardnessIdenticalLongBlockingTime}\label{theorem:hardness_identical_long_blocking_time}
	Unless $\P=\NP$,  \ScMPropiP, \ScMPropiiP and \ScMPropiiiP do not admit a $\mathcal{O}(m^{1-\varepsilon})$-approximation for any $\varepsilon > 0$, even in the case of identical jobs and when $\lb{},p,\rb{}$ are fixed.
\end{restatable}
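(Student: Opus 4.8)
The plan is to reduce the maximum independent set problem ($1$-IS) to each of the three problems \ScMPropiP, \ScMPropiiP and \ScMPropiiiP, using the fact (cited in the excerpt from~\cite{Haa99,Z06}) that $1$-IS admits no $\mathcal{O}(m^{\varepsilon-1})$-approximation unless $\P=\NP$. The key observation powering the reduction is \cref{lemma:nonNested}: under any of \textsc{prop-i}, \textsc{prop-ii}, \textsc{prop-iii}, every feasible schedule is \nonNested, so on conflicting machines the system times are genuinely disjoint. This means that any set of machines that are used \emph{simultaneously} at some moment must form an independent set in $G$, and therefore the number of distinct machines that can ever be used by an optimal schedule is essentially governed by $\alpha_1(G)$.

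First I would fix a single job type satisfying the desired property --- e.g.\ $(\lb{},p,\rb{})=(2,1,2)$ for \textsc{prop-i}, and note the same triple works for \textsc{prop-ii} and \textsc{prop-iii} since it has equal \qtime{s} trivially (one type) and satisfies $\lb{}>p$, $\rb{}>p$. Given a graph $G$ on $m$ vertices, I would build the instance with conflict graph $G$ and with $n$ identical jobs of this type, where $n$ is chosen large compared to $m$ (say $n=m^2$, or any polynomially bounded quantity that dominates $m$). The claim is then that the optimal makespan satisfies $\OPT = q\cdot\lceil n/\alpha_1(G)\rceil$ (with $q=\lb{}+p+\rb{}$), up to lower-order additive terms: since the schedule is \nonNested, at most $\alpha_1(G)$ jobs can be "in flight" in overlapping time windows, and since all jobs are identical with \qtime $q$, the best one can do is to pack jobs into $\alpha_1(G)$ parallel "lanes", each lane a sequence of back-to-back jobs on an independent set. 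Conversely, given a maximum independent set $\I$, placing jobs greedily on $|\I|$ machines in synchronized batches achieves makespan $q\lceil n/|\I|\rceil$. One must argue the lower bound carefully: a \nonNested schedule partitions time on each used machine into job-slots, and a volume/counting argument shows that at every instant the set of running jobs sits on an independent set, hence total work $nq$ is spread over an "effective width" of at most $\alpha_1(G)$, giving makespan at least $nq/\alpha_1(G)$.

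The reduction then transfers inapproximability as follows: a $\rho$-approximation for \ScMPropiP would produce a schedule of makespan at most $\rho\cdot q n/\alpha_1(G)$, and from such a schedule one can read off, at a suitably chosen time instant (e.g.\ by an averaging argument picking the moment of maximum "width", or simply by counting machines used), an independent set of size at least roughly $n/(\text{makespan}/q) \geq \alpha_1(G)/\rho$; thus a $\rho$-approximation for the scheduling problem yields a $\rho$-approximation for $1$-IS. Setting $\rho=\mathcal{O}(m^{1-\varepsilon})$ and recalling that the number of machines in the constructed instance is exactly $m$ (the vertex count of $G$), the inapproximability of $1$-IS is inherited verbatim. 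I would phrase the three cases (i)--(iii) uniformly, since the single job triple $(2,1,2)$ simultaneously witnesses all three properties, so essentially one reduction handles all three problems.

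The main obstacle I anticipate is the clean extraction of a large independent set from an approximate schedule, i.e.\ turning "makespan is small" into "there is a time instant at which many machines run jobs forming an independent set" without losing more than a constant factor. Since jobs are non-preemptive and all identical, the schedule on each used machine is a sequence of length-$q$ blocks (with possible idle gaps), and a double-counting argument over time should show that if the makespan is $T$ then the average number of jobs running per unit time is $nq/T$, hence at some instant at least $\lceil nq/T\rceil$ jobs run concurrently --- and by the \nonNested property their machines form an independent set. The only subtlety is handling idle time and the discreteness (the $\lceil\cdot\rceil$ and additive $O(1)$ slack), which is why choosing $n$ large relative to $m$ is useful: it makes these lower-order terms negligible and preserves the $m^{1-\varepsilon}$ gap. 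I would also double-check that the constructed instance genuinely falls under each property's hypothesis (in particular that "fixed $\lb{},p,\rb{}$" is respected, which it is since the triple is a universal constant independent of $G$).
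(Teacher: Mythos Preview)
Your proposal is correct and follows essentially the same route as the paper: use \cref{lemma:nonNested} to conclude that at any instant the running machines form an independent set, bound $\OPT$ by $q\lceil n/\alpha_1\rceil$, and from a $\rho$-approximate schedule extract an independent set of size $\gtrsim \alpha_1/\rho$ at the moment of maximum width. The only cosmetic difference is that the paper sets $n=\alpha_1(G)$ (so $\OPT=1$ exactly) whereas you take $n$ polynomially large in $m$ to absorb the rounding slack --- your choice is arguably cleaner since $\alpha_1(G)$ is not known to the reduction, though the paper's version is easily patched by iterating over all $n\le m$.
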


\begin{proof}
	By containment of \ScMPropiiP in \ScMPropiiiP, it suffices to show the inapproximability for \ScMPropiP and \ScMPropiiP. We restrict our attention to instances consisting of $n$ jobs with $q=\lb{} + p +\rb{} = 1$ and $n=\alpha_1(G)$. By~\cref{lemma:nonNested}, schedules for \ScMPropiP or \ScMPropiiP are \nonNested. Clearly, the optimum makespan \OPT is $1$ because $q = 1$.
	Suppose for some constant $\kappa > 0$ there exists a $(\kappa m^{1-\varepsilon})$-approximation algorithm for $\varepsilon > 0$ and let $\Pi$ denote its schedule.
	Moreover, let $\beta$ denote the maximum number of machines processing jobs in parallel in $\Pi$ at any point in time. As we have at most $n$ distinct starting times in $\Pi$ we can compute $\beta$ in polynomial time.
	Because $\Pi$ is \nonNested, we can transform $\Pi$ into a new schedule $\Pi'$ that processes all jobs non-idling on $\beta$ machines without increasing the makespan. Hence, we obtain
	$
	\|\Pi\|\geq \left\lceil\nicefrac{n}{\beta}\right\rceil 
	\geq \nicefrac{n}{\beta}
	$.
	This fact together with the assumption $\|\Pi\|\leq \kappa m^{1-\varepsilon} \cdot\OPT=\kappa m^{1-\varepsilon}$ yields
	$
	\beta\geq\nicefrac{n}{\|\Pi\|}
	\geq \nicefrac{n}{\kappa m^{1-\varepsilon}}=\nicefrac{1}{\kappa m^{1-\varepsilon}}\cdot\alpha(G)$.
	In other words, the $(\kappa m^{1-\varepsilon})$-approximation algorithm implies an $(\nicefrac{1}{\kappa}\cdot m^{\varepsilon-1})$-approximation algorithm for computing a maximum $1$-IS for every graph $G$; a contradiction~\cite{Haa99, Z06}.
\end{proof}

On the contrary, when a maximum $1$-IS of the conflict graph is at hand, we can compute optimal and near-optimal schedules for the above problems in polynomial time as it reduces to the classical problem $P||C_{\max}$.

\begin{restatable}{theorem}{nonNestedPolytime}\label{cor:nonNested_polytime}
	If a maximum $1$-IS of the conflict graph is given, then 
	\begin{romanenumerate}
		\item 
		\ScMPropiP \label{cor:nonNested_polytimei}
		can be solved in polynomial time, 
		\item
		\ScMPropiiP can be solved in polynomial time, and \label{cor:nonNested_polytimeii}
		\item 
		\ScMPropiiiP has a PTAS. \label{cor:nonNested_polytimeiii}
	\end{romanenumerate}
\end{restatable}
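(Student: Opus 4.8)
The plan is to reduce all three problems to the classical problem $P||C_{\max}$. Given the maximum $1$-IS $\I$ of $G$, set $\alpha_1 := |\I| = \alpha_1(G)$ and consider the $P||C_{\max}$ instance consisting of $n$ jobs with processing times $q_1,\dots,q_n$ (where $q_j=\lb{j}+p_j+\rb{j}$) and $\alpha_1$ identical machines; let $\tau$ denote its optimum makespan. I claim that the optimum makespan $\OPT$ of the given instance of \ScMP equals $\tau$. Given this, the theorem follows: for \ScMPropiP and \ScMPropiiP all jobs share a common system time $q$, so $\tau=\lceil n/\alpha_1\rceil\cdot q$, and a corresponding optimal schedule (distribute the $n$ jobs as evenly as possible over the machines of $\I$ and stack them back to back) admits a description of size polynomial in $|G|$ and $\log n$; for \ScMPropiiiP we run the PTAS for $P||C_{\max}$~\cite{HS87} on the reduced instance and realize its output on $\I$, obtaining makespan at most $(1+\varepsilon)\tau=(1+\varepsilon)\OPT$.

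The inequality $\OPT\le\tau$ is easy: take an optimal $P||C_{\max}$ schedule on $\alpha_1$ machines and identify those machines with the machines of $\I$, placing each job $j$ as the single interval of length $q_j$ prescribed. Since $\I$ is an independent set, no two blocking times lie on conflicting machines, so this is a feasible conflict-free schedule of makespan $\tau$.

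The reverse inequality $\OPT\ge\tau$ is the heart of the proof. Let $\Pi$ be an optimal schedule of the given instance; by \cref{lemma:nonNested} it is \nonNested. Let $\beta$ be the maximum over all times $t$ of the number of jobs running at $t$. Whenever two jobs run simultaneously they sit on non-conflicting machines, as otherwise their system times would overlap, contradicting that $\Pi$ is \nonNested; hence the at most $\beta$ machines active at any fixed time form an independent set, so $\beta\le\alpha_1$. The open intervals $(S_j^\Pi,C_j^\Pi)$ form an interval graph whose clique number equals the maximum number of intervals through a common point, i.e., $\beta$; color it properly with $\beta$ colors and send the $k$-th color class to the $k$-th machine of $\I$ (possible since $\beta\le\alpha_1=|\I|$), keeping all starting times. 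The resulting schedule $\Pi'$ satisfies $\|\Pi'\|=\|\Pi\|$ and is feasible: each used machine runs at most one job at a time because its color class consists of pairwise non-overlapping intervals, and conflict-freeness is automatic because the machines of $\I$ are pairwise non-conflicting. Padding with empty machines, $\Pi'$ is a schedule of $n$ jobs of lengths $q_j$ on $\alpha_1$ machines, so $\|\Pi\|=\|\Pi'\|\ge\tau$.

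The only non-routine ingredient is the transformation in the last paragraph, turning an arbitrary \nonNested schedule into one that uses only the $\alpha_1$ pairwise non-conflicting machines of $\I$ without increasing the makespan; everything else is bookkeeping together with the classical facts that $P||C_{\max}$ with identical jobs is solved by even distribution and that $P||C_{\max}$ admits a PTAS.
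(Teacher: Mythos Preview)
Your proof is correct and follows essentially the same approach as the paper: both use \cref{lemma:nonNested} to conclude that at most $\alpha_1$ jobs run at any time, reassign jobs to the given maximum $1$-IS while keeping all starting times, and then invoke even distribution for (i)--(ii) and the PTAS for $P||C_{\max}$ for (iii). Your explicit use of interval-graph coloring to realize the reassignment is a bit more detailed than the paper's one-line ``we can modify the job-to-machine assignment,'' but the argument is the same.
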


\begin{proof}
	Let $\Pi$ be an optimal schedule for one of the three problems. By~\cref{lemma:nonNested} $\Pi$ is \nonNested, hence, at any point in time at most $\alpha_1(G)$ jobs are running. Therefore, we can modify the job-to-machine assignment of $\Pi$ such that all jobs are processed on a maximum $1$-IS while maintaining the starting times of the jobs. As we did not increase the makespan, we have an optimal schedule where all jobs are assigned to a maximum $1$-IS and hence, the problem reduces to $P||C_{\max}$ on $\alpha_1(G)$ machines. Thus, evenly distributing all jobs with equal system time over the machines (for \ScMPropiP and \ScMPropiiP) and an already existing PTAS for $P||C_{\max}$ (for \ScMPropiiiP) yields the desired result.
\end{proof}

Note that if the number of machines is constant, i.e. $m = O(1)$, the results above hold via complete enumeration for finding a maximum $1$-IS. For the third case, this improves to an FPTAS by~\cite{Sah76}.
Similarly, when a $\nicefrac{1}{\gamma}$-approximate $1$-IS is at hand, we obtain approximation results using a list scheduling approach on the independent set.

\begin{restatable}{theorem}{ApprIndepSetLong}\label{theorem:ApprIndepSetLong}
	Given a $\nicefrac{1}{\gamma}$-approximate $1$-IS of the conflict graph, then there exists
	\begin{romanenumerate}
		\item a $\lceil\gamma\rceil$-approximation for \ScMPropiP, \label{item:approxi}
		\item a $\lceil\gamma\rceil$-approximation for \ScMPropiiP, and
		\item a $\left(\gamma + 1 - \nicefrac{1}{m}\right)$-approximation for \ScMPropiiiP,
	\end{romanenumerate}
\end{restatable}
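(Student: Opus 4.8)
The plan is to reduce each case to classical list scheduling on the given approximate independent set, exploiting that all feasible schedules are \nonNested by \cref{lemma:nonNested}. Let $\I$ be the $\nicefrac{1}{\gamma}$-approximate $1$-IS at hand, so $|\I|\geq \nicefrac{1}{\gamma}\cdot\alpha_1(G)$. Since every feasible schedule is \nonNested, at every point in time at most $\alpha_1(G)$ jobs run in parallel; in particular the optimal makespan $\OPT$ is bounded below by $\nicefrac{1}{\alpha_1(G)}\sum_{j\in\J}q_j$ (the total work divided by the maximum possible parallelism) and also by $\max_{j\in\J}q_j$. These are exactly the two lower bounds that drive list scheduling bounds, but now scaled by the fact that we only have $|\I|$ machines available rather than $\alpha_1(G)$.

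For parts \ref{item:approxi} and (ii), all jobs have equal system time $q$ (identical jobs in case (i); equal $q_j=q$ in case (ii)). Distribute the $n$ jobs as evenly as possible over the $|\I|$ machines of $\I$, one after another with no idle time; this is feasible because jobs on the independent set never conflict and, being \nonNested, nothing else is required. The resulting makespan is $\lceil n/|\I|\rceil\cdot q$. On the other hand, since any feasible schedule is \nonNested on at most $\alpha_1(G)$ machines, $\OPT\geq \lceil n/\alpha_1(G)\rceil\cdot q\geq (n/\alpha_1(G))\cdot q$. Combining, the ratio is at most $\lceil n/|\I|\rceil / (n/\alpha_1(G))$. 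Using $|\I|\geq \alpha_1(G)/\gamma$ and the elementary inequality $\lceil a/b\rceil\leq \lceil \gamma\rceil\cdot\lceil a/(\gamma b)\rceil$ — or more directly $\lceil n/|\I|\rceil \leq \lceil \gamma\cdot n/\alpha_1(G)\rceil \leq \lceil\gamma\rceil\cdot\lceil n/\alpha_1(G)\rceil$ when $n/\alpha_1(G)\geq 1$, and handling the small case $n\leq\alpha_1(G)$ separately where both makespans equal $q$ up to the factor — yields the bound $\lceil\gamma\rceil\cdot\OPT$. I would write this out carefully since the ceilings are the only subtlety.

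For part (iii) the jobs are arbitrary under \textsc{prop-iii}, so I would run Graham's list scheduling on the $|\I|$ machines of $\I$. Again feasibility is immediate: jobs placed on an independent set never block each other, and list scheduling produces a \nonNested assignment on those machines. Graham's analysis gives a makespan of at most $\bigl(1 + 1 - \nicefrac{1}{|\I|}\bigr)$ times the optimum of $P||C_{\max}$ on $|\I|$ identical machines; but I want to compare against $\OPT$ for \ScMPropiiiP, which by the \nonNested property equals the optimum of $P||C_{\max}$ on $\alpha_1(G)$ machines (as argued in the proof of \cref{cor:nonNested_polytime}). Denote by $T^*(k)$ the optimal makespan of $P||C_{\max}$ with the given job set on $k$ machines. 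Then the algorithm returns at most $2\,T^*(|\I|)$, and I need $T^*(|\I|)\leq$ something like $\gamma\cdot\OPT$ to close the gap — but this is too lossy. Instead, the right argument uses the two lower bounds directly: list scheduling on $|\I|$ machines gives makespan at most $\tfrac{1}{|\I|}\sum_j q_j + (1-\tfrac{1}{|\I|})\max_j q_j \leq \tfrac{\gamma}{\alpha_1(G)}\sum_j q_j + \max_j q_j \leq \gamma\cdot\OPT + \OPT = (\gamma+1)\cdot\OPT$; sharpening the second term via $(1-\tfrac{1}{|\I|})\leq(1-\tfrac{1}{m})$ and folding it into the max-load lower bound $\OPT\geq\max_j q_j$ gives the stated $\bigl(\gamma+1-\nicefrac{1}{m}\bigr)$.

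The main obstacle is getting the constants exactly right in the ceiling manipulations for (i) and (ii) and in apportioning the two Graham lower bounds in (iii) so that the second coefficient comes out as $1-\nicefrac{1}{m}$ rather than a weaker $1-\nicefrac{1}{|\I|}$ or plain $1$; this requires noting $|\I|\leq m$ so that $1-\nicefrac{1}{|\I|}\leq 1-\nicefrac{1}{m}$, and being slightly careful that the average-load term is compared against $\OPT\geq\nicefrac{1}{\alpha_1(G)}\sum_j q_j$ while the max-load term is compared against $\OPT\geq\max_j q_j$.
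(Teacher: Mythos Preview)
Your proposal is correct and follows essentially the same route as the paper: even distribution on $\I$ with the ceiling estimate $\lceil n/|\I|\rceil \le \lceil \gamma n/\alpha_1\rceil \le \lceil\gamma\rceil\lceil n/\alpha_1\rceil$ for (i)--(ii), and Graham list scheduling on $\I$ combined with the two lower bounds $\OPT\ge \tfrac{1}{\alpha_1}\sum_j q_j$ and $\OPT\ge \max_j q_j$ together with $|\I|\le m$ for (iii). Your case split $n\le\alpha_1$ in (i)--(ii) is unnecessary (the ceiling inequality holds unconditionally), and your initial detour via $T^*(|\I|)$ in (iii) can be dropped in the final write-up, but otherwise this matches the paper's argument.
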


\begin{proof}
Let $\mathcal I$ denote the given $\nicefrac{1}{\gamma}$-approximate $1$-IS, i.e., $|\mathcal {I}|\geq \nicefrac{\alpha_1}{\gamma}$.
First, we consider an instance of \ScMPropiP or \ScMPropiiP. Without loss of generality we assume $q=\lb{} + p +\rb{} = 1$. An optimal schedule distributes all jobs evenly over a maximum $1$-IS, i.e., we have
$
\OPT = \left\lceil\nicefrac{n}{\alpha_1}\right\rceil
$ as $q=1$.
Since each \qtime is $1$, we can find a schedule $\Pi$ with makespan 
$\|\Pi\| 
= \left\lceil \nicefrac{n}{|\I|} \right\rceil\cdot q 
\leq \left\lceil \nicefrac{\gamma\cdot n}{\alpha_1} \right\rceil 
\leq \lceil\gamma\rceil\cdot \left\lceil\nicefrac{n}{\alpha_1}\right\rceil 
= \lceil\gamma\rceil\cdot \OPT$ by distributing all jobs evenly on $\I$.

Next, we consider an instance of \ScMPropiiiP.
In order to construct $\Pi$, we use a list scheduling approach on $\I$. To this end, consider the jobs in an arbitrary order and iteratively assign the next job to a machine in $\I$ with minimum total completion time. As $\I$ is an independent set, $\Pi$ is feasible. Let $k$ be a job ending the latest in $\Pi$, i.e., $\|\Pi\|= C_k^\Pi=S_k^{\Pi}+q_k$.
By construction of $\Pi$,  all machines in $\I$ are processing some job at least up to time $S_k^{\Pi}$. Therefore,  $S_k^{\Pi}\leq \nicefrac{1}{|\I|}\sum_{j\in \J\setminus\{k\}}q_j
= (\nicefrac{1}{|\I|}\sum_{j\in \J}q_j)-\nicefrac{1}{|\I|}\cdot q_k$.

As in the proof of~\cref{cor:nonNested_polytime}, there exists an optimal schedule using only $\alpha_1$ machines. Therefore, the optimal makespan is lower bounded by the average load $\OPT\geq \frac{1}{\alpha_1}\sum_{j\in J}q_j$. Additionally, using the fact that 
$q_k\leq \OPT$ and $|\I|\leq m$, we obtain
\begin{align*}
\|\Pi\|&
=S_k^{\Pi}+q_k 
\leq
(\nicefrac{1}{|\I|}\sum_{j\in \J}q_j) +(1-\nicefrac{1}{|\I|})\cdot q_k 
\leq
\nicefrac{\alpha_1}{|\I |}\cdot \OPT +(1-\nicefrac{1}{|\I|})\cdot \OPT\\
&\leq (\gamma +1-\nicefrac{1}{m})\cdot \OPT.
\qedhere
\end{align*}
\end{proof}

The last approximation result generalizes the $(2-\nicefrac{1}{m})$-approximation by Graham~\cite{G66}.

\section{Identical Jobs}\label{sec:IdenticalNested}
In this section, we consider \ScMP with identical jobs, denoted by \IdenticalP. Complementary to the last section, we focus on the case of \emph{short blocking times}, i.e. $\max\{\lb{},\rb{}\}\leq p$. We denote an instance by \ScMIDSh with $n$ identical jobs with parameters $(\lb{},p,\rb{})$. By symmetry, we assume that $\lb{}\geq\rb{}$.

We start by introducing partial schedules which play an important role for \IdenticalP with short blocking times.

\subparagraph{$c$-Pattern.}
Consider a graph $G=(V,E)$  and a set of identical jobs with parameters $(\lb{},p,\rb{})$, where  $\lb{}\geq \rb{}$ and $q=\lb{}+p+\rb{}$. 
Let $c\in\mathbb{N}_{\geq 1}$ with $c\leq \lfloor\nicefrac{p}{\lb{}}\rfloor+1$ and let $\mathcal{I}=(I_1,I_2,\dots, I_c)$ be a  $c$-tuple of disjoint independent sets of $G$. A partial schedule of length $q+(c-1)\cdot \lb{}$ starting at some time $t$ is called a \emph{$c$-pattern on $\mathcal{I}$} if on each machine $i$ in $I_k$ with $k \in \{1,\dots,c\}$, there is one job starting at time $t+(k-1)\lb{}$. For an illustration consider \cref{fig:pattern}. 

\begin{figure}[htb]
		\centering
		\includegraphics[page=5,scale=1]{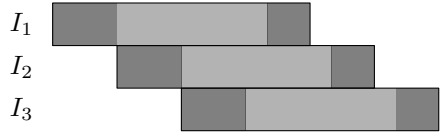}
		\caption{A $3$-pattern on three disjoint independent sets $I_1,I_2,I_3$.}
		\label{fig:pattern}
\end{figure}

In order to show connections between conflict-free schedules and $c$-ISs, we extract a $c$-IS from a conflict-free schedule. 
We say a job $j$ \emph{blocks} a time $t$ in schedule $\Pi$ if one of its blocking times contains $t$, i.e., $t\in\left((S_j^{\Pi},S_j^{\Pi}+\lb{})\cup(C_j^{\Pi}-\rb{},C_j^{\Pi})\right)$. 
For a schedule~$\Pi$, we define the quantity 
\[
\beta_c^{\Pi}:=\max_{t_1< \ldots<t_c} \left\{\left|\bigcup_{k=1}^c \left\{i\in V: i\text{ processes a job in $\Pi$ which blocks time } t_k\right\}\right|\right\}.
\]
Observe that for each time $t$ the machines which process a job blocking time $t$ form an $1$-IS, because $\Pi$ is conflict-free. Hence, $\beta_c^{\Pi}$ corresponds to the cardinality of a $c$-IS, since machines are not counted more than once. 
For later reference, we note that $\beta_c^{\Pi}$ can be computed in polynomial time for a constant $c$.

\begin{restatable}{lemma}{computingAlpha}\label{lemma:computingAlpha}
	For a schedule $\Pi$ with $n$ jobs and a constant $c$, $\beta_c^{\Pi}$ can be computed in time polynomial in $n$.
\end{restatable}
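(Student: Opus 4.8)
The plan is to exploit the fact that, as a function of time, the set of machines carrying a blocking job is piecewise constant with only $O(n)$ breakpoints; this turns the continuous maximization in the definition of $\beta_c^{\Pi}$ into a maximization over $O(n^c)$ candidate tuples.

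First I would set, for $t\in\mathbb{R}$, $B(t):=\{i\in V : i\text{ processes a job in }\Pi\text{ that blocks }t\}$. Since a job $j$ blocks exactly the \emph{open} set $(S_j^{\Pi},S_j^{\Pi}+\lb{})\cup(C_j^{\Pi}-\rb{},C_j^{\Pi})$, the map $t\mapsto B(t)$ is constant on each open interval cut out by consecutive elements of $T:=\{S_j^{\Pi},\,S_j^{\Pi}+\lb{},\,C_j^{\Pi}-\rb{},\,C_j^{\Pi}:j\in\J\}$, and $|T|\le 4n$. Because the blocking intervals are open, for every breakpoint $t^\ast\in T$ and every sufficiently small $\varepsilon>0$ we have $B(t^\ast)\subseteq B(t^\ast-\varepsilon)\cap B(t^\ast+\varepsilon)$, so the breakpoints themselves yield no set not already attained on an adjacent open piece. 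Hence there are sets $B_1,\dots,B_N$ with $N\le 4n+1$, each of cardinality at most $n$, and nonempty open intervals $U_1,\dots,U_N$ with $B(t)=B_\ell$ for all $t\in U_\ell$, such that $\{B(t):t\in\mathbb{R}\}\subseteq\{\emptyset,B_1,\dots,B_N\}$.

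Next I would prove the identity
\[
\beta_c^{\Pi} \;=\; \max\Bigl\{\, \Bigl| \bigcup\nolimits_{\ell\in S} B_\ell \Bigr| \;:\; S\subseteq\{1,\dots,N\},\ |S|\le c \Bigr\}.
\]
For ``$\le$'': given any $t_1<\dots<t_c$, each $B(t_k)$ is one of $\emptyset,B_1,\dots,B_N$, so $\bigcup_{k}B(t_k)=\bigcup_{\ell\in S}B_\ell$ for the set $S$ of indices occurring, which has $|S|\le c$. For ``$\ge$'': given $S=\{\ell_1,\dots,\ell_r\}$ with $r\le c$, pick one point in each $U_{\ell_i}$ and, if $r<c$, add $c-r$ further points inside one of these (nonempty, hence infinite) open intervals; relabelling in increasing order gives $t_1<\dots<t_c$ with $\bigcup_k B(t_k)=\bigcup_i B_{\ell_i}$.

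Finally, the algorithm computes and sorts $T$, reads off the $\le 4n+1$ sets $B_\ell$ by evaluating $B(\cdot)$ at one interior point of each interval (or by a sweep), enumerates all $O(N^c)=O(n^c)$ subsets $S$ with $|S|\le c$, computes $|\bigcup_{\ell\in S}B_\ell|$ in $O(cn)$ time each, and returns the maximum; for constant $c$ the total is polynomial in $n$, and since only comparisons and set unions are used, it is insensitive to the bit-length of the starting times. The only steps needing care are the observation that openness of the blocking intervals stops breakpoints from creating new sets, and the realizability direction of the displayed identity; both are elementary, and I do not anticipate a genuine obstacle.
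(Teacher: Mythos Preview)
Your approach is exactly the paper's: identify the $4n$ event times, sample one point per open interval between consecutive events, and brute-force over the $O(n^c)$ tuples.

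There is one small slip. From $B(t^\ast)\subseteq B(t^\ast-\varepsilon)\cap B(t^\ast+\varepsilon)$ you conclude that ``the breakpoints themselves yield no set not already attained on an adjacent open piece'' and then that $\{B(t):t\in\mathbb{R}\}\subseteq\{\emptyset,B_1,\dots,B_N\}$. This inference is wrong: at a breakpoint where one blocking interval ends and another begins while a third is running, $B(t^\ast)$ can be a \emph{proper} subset of both neighbouring $B_\ell$'s and hence equal to none of them. You then use the false containment verbatim in the ``$\le$'' direction (``each $B(t_k)$ is one of $\emptyset,B_1,\dots,B_N$'').

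The fix is immediate and in fact already contained in your correct inclusion $B(t^\ast)\subseteq B_{\ell}$ for an adjacent $\ell$: for any tuple $t_1<\dots<t_c$, replace each breakpoint $t_k$ by a nearby interior point; this can only enlarge $\bigcup_k B(t_k)$, which is all the ``$\le$'' direction needs. With that one-line correction the argument goes through and matches the paper's proof.
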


\begin{proof}
	The schedule has $4n$ event times, namely the starting time of the blocking times and the processing time and its completion time.
	For some point in time $t$ between every two consecutive event points, we count the number of machines processing a blocking interval. By definition of $\beta_{c}^\Pi$, it suffices to check $\mathcal{O}(n^{c})$ tuples. 
\end{proof}

The definition of $\beta_{c}^{\Pi}$ helps us to bound the makespan of a schedule $\Pi$ from below. To this end, we first give an upper bound on the number of jobs starting within an interval of length $q$.

\begin{restatable}{lemma}{MaxNumJobsArbitraryInterval}\label{Lemma:Maximum_number_jobs_in_interval_arbitrary_schedule}
	For every schedule $\Pi$ of an instance \ScMIDSh with $p\geq \lb{}\geq \rb{}>0$ and every time $t\geq 0$, the number of jobs starting within the interval $I:=[t, t+q)$ is at most $\beta_{k +1}^{\Pi}$, where $k:=\lceil \nicefrac{p}{\lb{}} \rceil$.
\end{restatable}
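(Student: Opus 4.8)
The plan is to fix a schedule $\Pi$ and a time $t$, let $J_I\subseteq\J$ be the set of jobs that start within $I=[t,t+q)$, and produce $k+1$ carefully chosen time points $t_1<\dots<t_{k+1}$ so that every job of $J_I$ contributes to $\beta_{k+1}^{\Pi}$, i.e. every machine running a job of $J_I$ blocks at least one of the $t_\ell$. Since the machines blocking any single $t_\ell$ form a $1$-IS and no machine is double-counted, this immediately yields $|J_I|\le\beta_{k+1}^{\Pi}$.

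The key observation is that a job $j\in J_I$ blocks the whole interval $(S_j^\Pi,S_j^\Pi+\lb{})$ of length $\lb{}$ at its start, and its next blocked interval (the second blocking time) does not begin until $S_j^\Pi+\lb{}+p$. So if I place sample points spaced exactly $\lb{}$ apart, the start-blocking interval of \emph{any} job whose start lies in a window of length $\lb{}$ is guaranteed to contain one of them. Concretely, I would set $t_\ell := t + (\ell-1)\lb{} + \tfrac{\lb{}}{2}$ for $\ell=1,\dots,N$, where $N$ is the number of length-$\lb{}$ windows needed to cover $[t,t+q)$; a job starting in $[t+(\ell-1)\lb{},\,t+\ell\lb{})$ then blocks $t_\ell$ because $t_\ell$ lies strictly inside $(S_j^\Pi,S_j^\Pi+\lb{})$. (One must nudge the $t_\ell$ slightly off the exact midpoints if necessary so that they avoid coinciding with any event time and land in \emph{open} blocking intervals; this is a harmless perturbation since there are finitely many event times.) The remaining arithmetic is to check $N=k+1$: the interval $[t,t+q)$ has length $q=\lb{}+p+\rb{}$, so it is covered by $\lceil q/\lb{}\rceil$ windows of length $\lb{}$, and one verifies $\lceil (\lb{}+p+\rb{})/\lb{}\rceil = 1 + \lceil (p+\rb{})/\lb{}\rceil \le 1 + \lceil p/\lb{}\rceil = 1+k$, using $0<\rb{}\le\lb{}$ (so $\lceil(p+\rb{})/\lb{}\rceil\le\lceil p/\lb{}\rceil$, since adding at most $\lb{}$ inside a ceiling with denominator $\lb{}$ increases it by at most $1$, and here the fractional slack absorbs it — this is the one inequality I would double-check carefully). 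Thus $t_1<\dots<t_{k+1}$ suffice to catch every start in $I$, giving $|J_I|\le\beta_{k+1}^{\Pi}$.

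The main obstacle I anticipate is purely bookkeeping at the boundary: making sure the window-to-sample-point correspondence is tight, i.e. that $k+1$ windows (not $k+2$) genuinely cover $[t,t+q)$ in all cases, and handling the degenerate situation where $\lb{}\mid p$ or where a job starts exactly at a window boundary. The perturbation argument for choosing the $t_\ell$ inside open intervals rather than at closed endpoints also needs a sentence, but it is routine. No deep idea is required beyond "sample the start-blocking intervals on an $\lb{}$-grid"; the content is the counting identity $\lceil q/\lb{}\rceil\le k+1$ under the short-blocking-time hypothesis $\rb{}\le\lb{}\le p$.
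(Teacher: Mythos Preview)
Your sampling idea is sound, but the crucial counting inequality fails. You claim $\lceil(p+\rb{})/\lb{}\rceil\le\lceil p/\lb{}\rceil$; take the unit case $p=\lb{}=\rb{}=1$: the left side is $2$ and the right side is $1$. More generally, whenever $\lb{}$ divides $p$ (so $k=p/\lb{}$) the inequality is strict in the wrong direction, since $\rb{}>0$ forces the ceiling up by one. Consequently your length-$\lb{}$ windows need $\lceil q/\lb{}\rceil=k+2$ sample points in these cases, not $k+1$, and sampling only first blocking intervals yields merely the weaker bound $\beta_{k+2}^{\Pi}$.

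The missing idea is to exploit the \emph{second} blocking interval as well. The paper partitions $I$ into $k+2$ subintervals, but makes the first one $I_1$ of length $\rb{}$ (not $\lb{}$) and the last one $I_{k+2}$ of length $\lb{}$. A job starting in $I_1=[t,t+\rb{})$ has its second blocking interval $(S_j+\lb{}+p,\,S_j+q)$ containing any point just below $t+q$; a job starting in $I_{k+2}=[t+q-\lb{},t+q)$ has its first blocking interval $(S_j,S_j+\lb{})$ containing the same point. Hence a single time point near $t+q$ covers both $V_1$ and $V_{k+2}$, and the remaining $k$ subintervals $I_2,\dots,I_{k+1}$ (each of length at most $\lb{}$) are handled by one point each via first blocking, giving $k+1$ points in total. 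You also need, and should state explicitly, that at most one job per machine can start in an interval of length $q$, so that counting machines equals counting jobs; you use this implicitly when passing from $|J_I|$ to the machine count in $\beta_{k+1}^{\Pi}$.
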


\begin{proof}
	Because $I$ has length $q=(\lb{}+p+\rb{})$ and is half-open, at most one job starts on each machine within $I$.
	We partition the interval $I$ into $k+2$ disjoint (left-closed, right-open) intervals $I_1,\dots,I_{k+2}$, where the first interval has length~$\rb{}$,  the next $(k-1)$ intervals have length~$\lb{}$, interval $I_{k+1}$ has length $(p+\lb{})-(\rb{}+(k-1)\lb{})<\lb{}$ and the last interval $I_{k+2}$ has length $\lb{}$, see \cref{fig:maximum_starting_jobsAPP}.

	\begin{figure}[htb]
		\centering
		\includegraphics[page=16]{Pandemic}\qquad
		\includegraphics[page=17]{Pandemic}
		\caption{Illustration for the proof of \cref{Lemma:Maximum_number_jobs_in_interval_arbitrary_schedule}.}
		\label{fig:maximum_starting_jobsAPP}
	\end{figure}
	
	By $V_{\ell}$ we denote the set of machines that have a job starting in $I_{\ell}$.
	For all $\ell$ the jobs processed on machine $V_{\ell}$ block a point in time arbitrarily close to the right end of $I_\ell$.
	Additionally, every job processed on a machine in $V_1$ as well as every job processed on $ V_{k+2}$ block a point in time arbitrarily close to the right end of the whole interval $I$. Thus, the number of jobs can be bounded from above by  $\beta_{k+1 }^{\Pi}$. 
\end{proof}

Next, we show that an upper bound on the number of jobs starting within an interval of a specific lengths implies a lower bound on the makespan.

\begin{restatable}{lemma}{StartingJobsBound}\label{lem:startingJobs}
	
	Let $\Pi$ be a schedule for an instance \ScMIDSh with $\lb{}\geq \rb{}$ such that in every interval of length $L \leq q$ at most $\beta$ jobs start. Then, for \qtime $q:=\lb{}+p+\rb{}$, it holds that
	\[\|\Pi \|\geq L \lfloor\nicefrac{n}{\beta}\rfloor+
	\begin{cases}
	0& \text{if }\beta \text{ divides } n\\
	q& \text{otherwise.}
	\end{cases}
	\]
\end{restatable}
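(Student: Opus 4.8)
The plan is to argue by a counting/packing argument on the time axis. Let $\Pi$ be a schedule as in the statement, so in every interval of length $L\le q$ at most $\beta$ jobs start. I would partition (or cover) the time horizon $[0,\|\Pi\|)$ into consecutive blocks of length $L$: the blocks $[0,L),[L,2L),[2L,3L),\dots$ Each such block is an interval of length $L\le q$, so by hypothesis at most $\beta$ jobs have their starting time inside it. Hence if the schedule uses $N$ blocks of length $L$ to contain all $n$ starting times, then $n\le N\beta$, i.e. $N\ge \lceil n/\beta\rceil$, and therefore the horizon must have length at least $(\lceil n/\beta\rceil - 1)L$ just to fit all the starting times — but this crude bound is not quite strong enough, since it ignores that the \emph{last} job still needs its full \qtime $q$ to complete, and it uses a ceiling rather than the floor appearing in the claim. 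So the argument needs to be sharpened.

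The sharper version I would carry out: let $S$ be a job with the latest starting time $S_S^\Pi$, so $\|\Pi\|\ge S_S^\Pi + q$. Consider the half-open interval $[0, S_S^\Pi]$ (closed on the right so that it contains the start of $S$). Partition $[0,S_S^\Pi)$ from the right into maximal blocks of length $L$: that is, look at intervals $[S_S^\Pi - L, S_S^\Pi), [S_S^\Pi - 2L, S_S^\Pi - L),\dots$. Suppose $\|\Pi\| < L\lfloor n/\beta\rfloor + q$ (in the non-dividing case) or $\|\Pi\| < L\lfloor n/\beta\rfloor = Ln/\beta$ (in the dividing case); I aim for a contradiction. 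In the non-dividing case this gives $S_S^\Pi \le \|\Pi\| - q < L\lfloor n/\beta\rfloor$, so $S_S^\Pi$ lies strictly below $L\lfloor n/\beta\rfloor$, meaning the starting times of all $n$ jobs (including $S$) lie in the half-open interval $[0, L\lfloor n/\beta\rfloor)$, which splits into exactly $\lfloor n/\beta\rfloor$ blocks of length $L$. Each block contains at most $\beta$ starting times, so $n \le \beta\lfloor n/\beta\rfloor$, forcing $\beta\mid n$ — contradicting the "otherwise" hypothesis. In the dividing case, if $\|\Pi\| < Ln/\beta$ then all $n$ starting times lie in $[0, Ln/\beta)$, which is $n/\beta$ blocks of length $L$; but one of these blocks must then contain strictly more than $\beta$ starting times by pigeonhole (if every block had at most $\beta$, that accounts for at most $n$ starts but the last job $S$ would need its completion within the horizon too — I need to be slightly careful and instead observe all starts fit in $n/\beta - 1$ full blocks plus a partial one, or more cleanly: $S_S^\Pi < Ln/\beta - q \le Ln/\beta - L = L(n/\beta - 1)$, so all starts lie in $L(n/\beta-1)$ length worth of blocks, i.e. $n/\beta - 1$ blocks, giving $n \le \beta(n/\beta - 1) = n - \beta < n$), a contradiction.

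The routine-but-essential bookkeeping is to be careful about half-open versus closed intervals so that every starting time lands in exactly one block and the block count comes out to precisely $\lfloor n/\beta\rfloor$ rather than a ceiling; the asymmetry between the two cases in the statement is exactly what this half-open convention produces. I expect the main obstacle to be getting the off-by-one bookkeeping right: one must use that the latest-starting job contributes an extra $q$ beyond its start time, so the horizon decomposes as (a stretch of length a multiple of $L$ containing all starts) followed by the tail of length $q$ needed to finish that last job, and then invoke the per-block bound $\beta$ together with the divisibility dichotomy. No deeper idea is needed — the $c$-IS machinery and \cref{Lemma:Maximum_number_jobs_in_interval_arbitrary_schedule} are not used here; this lemma is purely the elementary packing statement that converts a per-interval start-count bound into a makespan lower bound, and will be combined with those earlier lemmas elsewhere.
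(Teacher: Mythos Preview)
Your proposal is correct and takes essentially the same approach as the paper: partition the time axis into consecutive length-$L$ blocks, bound the number of starts per block by~$\beta$, and pigeonhole to force either $\lfloor n/\beta\rfloor$ full blocks (plus the tail $q$ of the last job when $\beta\nmid n$) or a contradiction. Your write-up is in fact more careful than the paper's terse version---in particular you spell out explicitly where the hypothesis $L\le q$ is used in the dividing case---so once you streamline the hedging (``I need to be slightly careful'') and fix the stray bracket in ``$[0,S_S^\Pi]$ (closed on the right)'', it will be a clean proof.
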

\begin{proof}
	We divide $\Pi$ into intervals of length $L$ starting with 0. By assumption, at most $\beta$ jobs start in each interval. Hence, the number of these intervals (where some job is processed) is at least~$\lfloor\nicefrac{n}{\beta}\rfloor$. Moreover, if $\beta$ does not divide $n$, then at least some job (of length $q$) starts after time $L\lfloor\nicefrac{n}{\beta}\rfloor+q$.
\end{proof}

\cref{Lemma:Maximum_number_jobs_in_interval_arbitrary_schedule,lem:startingJobs} immediately imply a lower bound on the makespan of any schedule.

\begin{restatable}{lemma}{LowerBoundIdentical}\label{corr:lower_bound}
	A schedule $\Pi$ of \ScMIDSh with $\lb{}\geq \rb{}>0$ has makespan   $\|\Pi\|\geq q\cdot\lceil\nicefrac{n}{\beta_{k+1}^{\Pi}}\rceil$, where $k:=\lceil \nicefrac{p}{\lb{}}\rceil$.
\end{restatable}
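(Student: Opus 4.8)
The plan is to combine the two preceding lemmas directly, which is exactly why the statement is phrased as an ``immediate'' consequence. The only mild subtlety is reconciling the two possible cases in the bound of \cref{lem:startingJobs} with the single clean expression $q\cdot\lceil \nicefrac{n}{\beta_{k+1}^{\Pi}}\rceil$ appearing here.

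\smallskip

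First I would set $\beta:=\beta_{k+1}^{\Pi}$ and $L:=q$, and observe that since the instance is of type \ScMIDSh we have $\max\{\lb{},\rb{}\}\leq p$, so in particular $p\geq\lb{}\geq\rb{}>0$ and $k=\lceil\nicefrac{p}{\lb{}}\rceil$ is well defined. Then I would apply \cref{Lemma:Maximum_number_jobs_in_interval_arbitrary_schedule}: in every interval of length $q$ (half-open), at most $\beta_{k+1}^{\Pi}$ jobs start in $\Pi$. This is precisely the hypothesis needed to invoke \cref{lem:startingJobs} with $\beta=\beta_{k+1}^{\Pi}$ and $L=q\leq q$, which yields
\[
\|\Pi\|\;\geq\; q\Big\lfloor\tfrac{n}{\beta_{k+1}^{\Pi}}\Big\rfloor+
\begin{cases}
0 & \text{if }\beta_{k+1}^{\Pi}\text{ divides }n,\\
q & \text{otherwise.}
\end{cases}
\]

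\smallskip

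Finally I would rewrite the right-hand side in terms of the ceiling. If $\beta_{k+1}^{\Pi}$ divides $n$, then $\lfloor n/\beta_{k+1}^{\Pi}\rfloor = n/\beta_{k+1}^{\Pi} = \lceil n/\beta_{k+1}^{\Pi}\rceil$, so the bound reads $\|\Pi\|\geq q\cdot\lceil n/\beta_{k+1}^{\Pi}\rceil$. Otherwise $\lceil n/\beta_{k+1}^{\Pi}\rceil = \lfloor n/\beta_{k+1}^{\Pi}\rfloor + 1$, so $q\lfloor n/\beta_{k+1}^{\Pi}\rfloor + q = q\cdot\lceil n/\beta_{k+1}^{\Pi}\rceil$ as well. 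In both cases the bound becomes $\|\Pi\|\geq q\cdot\lceil\nicefrac{n}{\beta_{k+1}^{\Pi}}\rceil$, as claimed.

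\smallskip

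There is essentially no obstacle here: the real content lies in \cref{Lemma:Maximum_number_jobs_in_interval_arbitrary_schedule} (the interval-partition argument that pins down $k+1$ as the right parameter) and in \cref{lem:startingJobs} (the pigeonhole-style counting over length-$q$ blocks), both of which are already established. The one point worth stating carefully in the write-up is the verification that the case distinction collapses to a single ceiling expression; everything else is a direct chaining of the two lemmas.
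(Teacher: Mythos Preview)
Your proposal is correct and takes essentially the same approach as the paper: the paper states this lemma as an immediate consequence of \cref{Lemma:Maximum_number_jobs_in_interval_arbitrary_schedule} and \cref{lem:startingJobs} without writing out any details, and what you have done is precisely fill in that chaining together with the floor-to-ceiling bookkeeping.
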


Using the fact that there exists no $O(m^{\varepsilon-1})$-approximation for computing a maximum $c$-IS~\cite{Haa99, LY93, Z06} and similar ideas as used for \cref{theorem:hardness_identical_long_blocking_time}, we can obtain an inapproximability result for the case of identical jobs for all fixed constants $\lb{},p,\rb{}$.

\begin{restatable}{theorem}{HardnessIdenticalShort}\label{theorem:hardness_identical_short_blocking_time}
	Unless $\P=\NP$, there exists no $\mathcal{O}(m^{1-\varepsilon})$-approximation for any $\epsilon > 0$ for \IdenticalP, even for any choice of fixed positive parameters $(\lb{},p,\rb{})$.
\end{restatable}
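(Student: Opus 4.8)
The plan is to mimic the reduction behind \cref{theorem:hardness_identical_long_blocking_time}, with two adjustments forced by short blocking times. First, the lower bound on the makespan of an approximate schedule will come from \cref{corr:lower_bound} rather than from compressing a basic schedule onto few machines. Second, since that bound is phrased through $\beta_{k+1}^{\Pi}$ with $k:=\lceil\nicefrac{p}{\lb{}}\rceil$, and $\beta_{k+1}^{\Pi}$ is the cardinality of an induced $(k+1)$-colorable subgraph (a $(k+1)$-IS) of the conflict graph, the reduction will be \emph{from} maximum $c$-IS with $c=k+1$, not from maximum independent set. Here $k+1$ is a fixed constant because $(\lb{},p,\rb{})$ is fixed, so the inapproximability of maximum $c$-IS for every fixed $c$~\cite{Haa99, LY93, Z06} applies.

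First I would set aside the parameter choices not covered by \cref{corr:lower_bound}: by symmetry assume $\lb{}\geq\rb{}$; if $\lb{}>p$, the instances are of type \textsc{prop-i} and \cref{theorem:hardness_identical_long_blocking_time} already gives the claim, so it remains to treat fixed positive parameters with $p\geq\lb{}\geq\rb{}>0$. Given a conflict graph $G$ on $m$ vertices, I would consider, for each $n\in\{1,\dots,m\}$, the instance of \IdenticalP with conflict graph $G$ and $n$ identical jobs of parameters $(\lb{},p,\rb{})$, and record two facts (set $q:=\lb{}+p+\rb{}$). (i)~If $n=\alpha_1(G)$ then $\OPT=q$: scheduling all $n$ jobs at time $0$ on the machines of a maximum independent set of $G$ is conflict-free, since no two machines of an independent set are in conflict, and it has makespan $q$; no schedule beats $q$ because every job occupies a window of length $q$. (ii)~For \emph{every} schedule $\Pi$ of such an instance, \cref{corr:lower_bound} gives $\|\Pi\|\geq q\lceil\nicefrac{n}{\beta_{k+1}^{\Pi}}\rceil\geq\nicefrac{qn}{\beta_{k+1}^{\Pi}}$; moreover $\beta_{k+1}^{\Pi}\leq\alpha_{k+1}(G)$, and by \cref{lemma:computingAlpha} the value $\beta_{k+1}^{\Pi}$ and a witnessing $(k+1)$-IS of that cardinality can be extracted from $\Pi$ in polynomial time, as $k+1=\mathcal{O}(1)$.

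To assemble the contradiction, suppose \IdenticalP with these fixed parameters admitted a $(\kappa m^{1-\varepsilon})$-approximation for constants $\kappa,\varepsilon>0$, and let $\Pi_n$ denote the schedule it returns on the $n$-job instance. Running it for all $n\in\{1,\dots,m\}$ and keeping the largest $(k+1)$-IS extracted from $\Pi_1,\dots,\Pi_m$ is a polynomial-time procedure, since each run and each extraction (\cref{lemma:computingAlpha}) is polynomial. For $n^{\star}:=\alpha_1(G)$ we have $\|\Pi_{n^{\star}}\|\leq\kappa m^{1-\varepsilon}\OPT=\kappa m^{1-\varepsilon}q$, so
\[
\beta_{k+1}^{\Pi_{n^{\star}}}\;\geq\;\frac{q\,n^{\star}}{\|\Pi_{n^{\star}}\|}\;\geq\;\frac{\alpha_1(G)}{\kappa\, m^{1-\varepsilon}}\;\geq\;\frac{\alpha_{k+1}(G)}{\kappa\,(k+1)\,m^{1-\varepsilon}},
\]
where the last inequality uses $\alpha_{k+1}(G)\leq(k+1)\,\alpha_1(G)$ (a maximum $(k+1)$-IS is a disjoint union of $k+1$ independent sets, each of size at most $\alpha_1(G)$). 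Hence the procedure is an $\mathcal{O}(m^{\varepsilon-1})$-approximation for maximum $(k+1)$-IS on every graph, contradicting \cite{Haa99, LY93, Z06}; so no such approximation for \IdenticalP exists unless $\P=\NP$.

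I do not anticipate a genuine obstacle; the care needed is bookkeeping. The two points to get right are that the object read off a schedule is really a $(k+1)$-IS (so that $c$-IS hardness, not merely $1$-IS hardness, may be invoked) and that passing from $\alpha_1(G)$ to $\alpha_{k+1}(G)$ costs only the constant factor $k+1$, so the exponent $1-\varepsilon$ survives; and, as in \cref{theorem:hardness_identical_long_blocking_time}, one sweeps over $n\in\{1,\dots,m\}$ to avoid needing $\alpha_1(G)$ in advance. One could instead start from $n=\alpha_{k+1}(G)$, bounding $\OPT\leq q\lceil\nicefrac{\alpha_{k+1}(G)}{\alpha_1(G)}\rceil\leq(k+1)q$, or (when $\lb{}$ divides $p$) schedule a single $(k+1)$-pattern on a maximum $(k+1)$-IS; these variants reach the same conclusion.
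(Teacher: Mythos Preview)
Your argument is correct and follows the same overall skeleton as the paper: reduce to short blocking times via \cref{theorem:hardness_identical_long_blocking_time}, lower-bound any approximate schedule through \cref{corr:lower_bound}, extract a $(k+1)$-IS of size $\beta_{k+1}^{\Pi}$ via \cref{lemma:computingAlpha}, and obtain a contradiction to the inapproximability of maximum $(k+1)$-IS. The one substantive difference is how the optimum is upper-bounded. The paper takes $n\geq\alpha_{k+1}$ and bounds $\OPT$ from above by repeating $(k{+}1)$-patterns on a maximum $(k{+}1)$-IS, yielding $\OPT\leq 2(q+k\lb{})\,n/\alpha_{k+1}$ and hence $\beta_{k+1}^{\Pi}\geq \tfrac{q}{2(q+k\lb{})}\cdot\tfrac{\alpha_{k+1}}{\kappa m^{1-\varepsilon}}$. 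You instead pick $n=\alpha_1(G)$ so that $\OPT=q$ exactly, and then pass from $\alpha_1$ to $\alpha_{k+1}$ via the elementary inequality $\alpha_{k+1}\leq(k+1)\alpha_1$. Your route is slightly lighter (it never needs the $(k{+}1)$-pattern construction for the hardness direction) and gives the cleaner constant $1/(k+1)$; the paper's route has the advantage of working directly with $\alpha_{k+1}$ throughout and reusing the pattern machinery that is needed for the positive results anyway. Your explicit sweep over $n\in\{1,\dots,m\}$ also patches a point the paper leaves implicit, namely that the reduction must not assume knowledge of $\alpha_1(G)$ (or $\alpha_{k+1}(G)$) in advance.
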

\begin{proof}
	By~\cref{theorem:hardness_identical_long_blocking_time} it suffices to restrict to the case in which the blocking times are short.
	Suppose for some $\kappa >0$ that there exists a $(\kappa m^{1-\varepsilon})$-approximation $\mathcal A$ for $\varepsilon > 0$.
	We assume $\lb{}\geq \rb{}$ and define the constant $k:=\lceil \nicefrac{p}{\lb{}}\rceil$. For an instance \ScMIDSh with $n\geq \alpha_{k+1}$, we consider the schedule $\Pi$ computed by $\mathcal A$.
	Because $\mathcal A$ is a $\kappa m^{1-\varepsilon}$-approximation and by \cref{corr:lower_bound},  the schedule $\Pi$ computed by $\mathcal A$ has makespan
	$\|\Pi\|
	\leq (\kappa m^{1-\varepsilon}) \cdot  OPT$
	and 
	$ 
	\|\Pi\|
	\geq q\cdot \left\lceil\nicefrac{n}{\beta_{k+1}^{\Pi}}\right\rceil\geq qn\cdot\nicefrac{1}{\beta_{k+1}^{\Pi}}.
	$
	Recall that, by \cref{lemma:computingAlpha}, we can compute a $(k+1)$-IS from $\Pi$ of size $\beta_{k+1}^\Pi$ in polynomial time.
	
	Moreover, we obtain a feasible schedule by repeatedly using $(k +1)$-patterns on a maximum $(k +1)$-IS, while leaving the other machines idle. 
	Recall that a $(k +1)$-pattern has length $(q+\lfloor \nicefrac{p}{\lb{}}\rfloor \lb{})\leq (q+k\lb{})$ and schedules $\alpha_{k+1}$ jobs.
	This yields the upper bound 
	$
	\OPT
	\leq (q+k\lb{})
	\cdot \left\lceil\nicefrac{n}{\alpha_{k+1}}\right\rceil 
	\leq 2(q+k\lb{})\cdot \nicefrac{n}{\alpha_{k+1}},
	$
	where the last inequality uses the fact $n\geq \alpha_{k+1}$. 
	All together we obtain
	\[
	\beta_{k+1}^{\Pi}\geq \frac{qn}{\|\Pi\|} \geq \frac{qn}{\kappa m^{1-\varepsilon}\cdot\OPT} \geq \frac{1}{\kappa m^{1-\varepsilon}}\cdot \frac{q}{2(q+k \lb{})}\cdot \alpha_{k+1},
	\]
	where $\nicefrac{q}{q+k \lb{}}$ is a constant $\leq 1$. By~\cref{lemma:computingAlpha}, $\mathcal{A}$ would therefore imply a $\mathcal{O}(m^{\varepsilon-1})$-approximation for computing a maximum $(k+1)$-IS; a contradiction~\cite{Haa99,LY93, Z06}.
\end{proof}

 
 Note that this result holds for the case when the running time depends polynomially in $n$ (instead of $\log(n)$). However, if we are given a maximum $c$-IS of the conflict graph for some suitable $c$, we obtain approximation algorithms with performance guarantee better than $2$.

\begin{restatable}{theorem}{ApproxIdenticalShortOne}\label{thm:approximation_identical_short_max}
	If
	$0 < \rb{}\leq\lb{}\leq p$ and we are  given a maximum $\left(\lceil\nicefrac{p}{\lb{}}\rceil +1\right)$-IS of $G$, then
	\IdenticalP allows for  a $\left(1+\lfloor \nicefrac{p}{\lb{}}\rfloor \cdot \nicefrac{\lb{}}{q}\right)$-approximation, with  $\left(1+\lfloor \nicefrac{p}{\lb{}}\rfloor \cdot \nicefrac{\lb{}}{q}\right)<2$.
\end{restatable}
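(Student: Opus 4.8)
The plan is to bracket both the optimum and a constructed schedule between the same quantity and then read off the ratio. Throughout, write $k:=\lceil\nicefrac{p}{\lb{}}\rceil$ and let $\alpha:=\alpha_{k+1}$ be the cardinality of the given maximum $(k+1)$-IS $\mathcal I=(I_1,\dots,I_{k+1})$.

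\emph{Lower bound.} Since $\lb{}\ge\rb{}>0$, \cref{corr:lower_bound} applies, so an optimal schedule $\Pi^\ast$ of the instance \ScMIDSh satisfies $\OPT\ge q\lceil\nicefrac{n}{\beta_{k+1}^{\Pi^\ast}}\rceil$. As $\beta_{k+1}^{\Pi^\ast}$ is the size of some induced $(k+1)$-colourable subgraph of $G$, we have $\beta_{k+1}^{\Pi^\ast}\le\alpha$, and hence $\OPT\ge q\lceil\nicefrac{n}{\alpha}\rceil$.

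\emph{Upper bound.} I would construct a schedule $\Pi$ by placing $(\lfloor\nicefrac{p}{\lb{}}\rfloor+1)$-patterns on (subsets of) $I_1,\dots,I_{k+1}$ back to back in time. Such a pattern is feasible since its index attains the maximum value $\lfloor\nicefrac{p}{\lb{}}\rfloor+1$ permitted by the definition, it has length exactly $q+\lfloor\nicefrac{p}{\lb{}}\rfloor\lb{}$, and two consecutive copies occupy disjoint time windows — the latest completion time inside one copy equals the earliest starting time of the next — so no conflicts arise across copies; thus $\Pi$ is conflict-free and, since only the periodic pattern has to be computed and is then repeated, $\Pi$ is produced in time polynomial in $|G|$ and $\log n$. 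Suppose each copy schedules one job on \emph{every} machine of $I_1\cup\dots\cup I_{k+1}$, i.e., $\alpha$ jobs. Then $\lceil\nicefrac{n}{\alpha}\rceil$ copies suffice, so $\|\Pi\|\le(q+\lfloor\nicefrac{p}{\lb{}}\rfloor\lb{})\lceil\nicefrac{n}{\alpha}\rceil$, and combining with the lower bound
\[
\frac{\|\Pi\|}{\OPT}\ \le\ \frac{(q+\lfloor\nicefrac{p}{\lb{}}\rfloor\lb{})\lceil\nicefrac{n}{\alpha}\rceil}{q\lceil\nicefrac{n}{\alpha}\rceil}\ =\ 1+\frac{\lfloor\nicefrac{p}{\lb{}}\rfloor\lb{}}{q},
\]
which is strictly below $2$ because $\lfloor\nicefrac{p}{\lb{}}\rfloor\lb{}\le p<\lb{}+p+\rb{}=q$ (using $\lb{},\rb{}>0$).

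\emph{The obstacle.} The one nontrivial point is ensuring that a single copy serves \emph{all} $k+1$ sets. If $\lb{}\mid p$ this is immediate: then $\lfloor\nicefrac{p}{\lb{}}\rfloor+1=k+1$, so the pattern is a genuine $(k+1)$-pattern placing one job on each $I_r$, staggered by $\lb{}$, and the argument above goes through verbatim. If $\lb{}\nmid p$, a $(\lfloor\nicefrac{p}{\lb{}}\rfloor+1)$-pattern uses only $k$ of the $k+1$ sets, and when the leftover set is nonempty it cannot be accommodated: the staircase of first blocking intervals leaves no free sub-interval of length $\lb{}$, so inserting a job of the leftover set would stretch the copy past $q+\lfloor\nicefrac{p}{\lb{}}\rfloor\lb{}$. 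I expect to resolve this either by interleaving cyclically rotated $(\lfloor\nicefrac{p}{\lb{}}\rfloor+1)$-patterns, so that over a block of $k+1$ copies each set is served exactly $k$ times while consecutive copies overlap just enough to keep the amortised copy length at $q+\lfloor\nicefrac{p}{\lb{}}\rfloor\lb{}$; or by combining a plain $(\lfloor\nicefrac{p}{\lb{}}\rfloor+1)$-pattern on the $k$ largest of the given sets with a sharpening of \cref{corr:lower_bound} — the bound $q\lceil\nicefrac{n}{\alpha}\rceil$ is loose precisely when $\lb{}\nmid p$ and the leftover set is nonempty — so that the weaker guarantee of that pattern still meets the claimed ratio. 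This reconciliation of the pattern construction with the given $(\lceil\nicefrac{p}{\lb{}}\rceil+1)$-IS is the main hurdle; everything else is bookkeeping.
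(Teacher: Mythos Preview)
Your approach is exactly the paper's: bound $\OPT\ge q\lceil\nicefrac{n}{\alpha_{k+1}}\rceil$ via \cref{corr:lower_bound}, construct $\Pi$ by stacking $(k{+}1)$-patterns on the given maximum $(k{+}1)$-IS to get $\|\Pi\|\le(q+\lfloor\nicefrac{p}{\lb{}}\rfloor\,\lb{})\lceil\nicefrac{n}{\alpha_{k+1}}\rceil$, and divide. The paper's proof consists of precisely these three lines.

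The obstacle you flag --- that when $\lb{}\nmid p$ one has $k+1>\lfloor\nicefrac{p}{\lb{}}\rfloor+1$, so a $(k{+}1)$-pattern is forbidden by the definition (and indeed the $(k{+}1)$-st staggered job's first blocking interval would overlap the first job's second blocking interval) --- is genuine, and the paper's proof does not address it either: it simply writes ``$(k{+}1)$-pattern'' and asserts length $q+\lfloor\nicefrac{p}{\lb{}}\rfloor\,\lb{}$, which by the $c$-pattern definition would correspond to $c=\lfloor\nicefrac{p}{\lb{}}\rfloor+1$, not $c=k{+}1$. So you have reproduced the paper's argument and, by explicitly naming the gap, been more careful than the authors; neither your sketch nor the paper supplies a worked-out fix for the case $\lb{}\nmid p$.
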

An analogous result for long blocking times is given in \cref{cor:nonNested_polytime}(\ref{cor:nonNested_polytimei}).
\begin{proof}

	Consider an instance \ScMIDSh and let $k = \lceil\nicefrac{p}{\lb{}}\rceil$.
	In order to construct a schedule~$\Pi$, we use $\left\lceil\nicefrac{n}{\alpha_{k+1}}\right\rceil$ many $(k+1)$-patterns on a maximum $(k+1)$-IS $\mathcal I$ of~$G$, while leaving all other machines idle. $\Pi$ yields a makespan of 
	$
	\|\Pi\|\leq(q+\lfloor \nicefrac{p}{\lb{}}\rfloor \lb{})\cdot \left\lceil\nicefrac{n}{\alpha_{k+1}}\right\rceil
	$.
	By \cref{corr:lower_bound}, it holds that 
	$\OPT\geq q\cdot \left\lceil\nicefrac{n}{\alpha_{k+1}}\right\rceil$.
	Moreover, note that $\lfloor \nicefrac{p}{\lb{}}\rfloor \cdot \nicefrac{\lb{}}{q}< 1$ and, hence, $\left(1+\lfloor \nicefrac{p}{\lb{}}\rfloor \cdot \nicefrac{\lb{}}{q}\right)<2$.
\end{proof}
Note that if the number of machines is constant, i.e. $m = O(1)$, the result above hold via complete enumeration for finding a maximum $c$-IS. 
Similarly, if we are given an approximate $c$-IS of the conflict graph for some suitable $c$, corresponding approximation results can be derived.

\begin{restatable}{theorem}{ApproxIdenticalShortTwo}\label{cor:approximation_identical_short_approxmax}
		If 
	$0 < \rb{}\leq\lb{}\leq p$ and we are given a $\nicefrac{1}{\gamma}$-approximate $\left(\lceil\nicefrac{p}{\lb{}}\rceil +1\right)$-IS of $G$, then \IdenticalP  allows for  a $2\gamma\left(1+\lfloor \nicefrac{p}{\lb{}}\rfloor \cdot \nicefrac{\lb{}}{q}\right)$-approximation, with \mbox{$2\gamma\left(1+\lfloor \nicefrac{p}{\lb{}}\rfloor \cdot \nicefrac{\lb{}}{q}\right) < 4\gamma$}.
\end{restatable}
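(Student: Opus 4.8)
The plan is to reuse, essentially verbatim, the pattern-stacking construction from the proof of \cref{thm:approximation_identical_short_max}, but with the maximum $(\lceil\nicefrac{p}{\lb{}}\rceil+1)$-IS replaced by the given approximate one, and then to track the two sources of additional loss compared to that theorem: the fact that the supplied independent set is smaller, and the rounding inherent in the number of stacked patterns.

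In detail, set $k:=\lceil\nicefrac{p}{\lb{}}\rceil$ and let $\I=(I_1,\dots,I_{k+1})$ be the given $\nicefrac{1}{\gamma}$-approximate $(k+1)$-IS, so that $|\I|:=|I_1\cup\dots\cup I_{k+1}|\geq\nicefrac{\alpha_{k+1}}{\gamma}$. As in \cref{thm:approximation_identical_short_max}, I would construct a schedule $\Pi$ for the instance \ScMIDSh by placing $\lceil\nicefrac{n}{|\I|}\rceil$ consecutive $(k+1)$-patterns on $\I$ and leaving every other machine idle; exactly as there, each such pattern is conflict-free, schedules $|\I|$ jobs, and has length at most $q+\lfloor\nicefrac{p}{\lb{}}\rfloor\lb{}$, so that $\|\Pi\|\leq(q+\lfloor\nicefrac{p}{\lb{}}\rfloor\lb{})\cdot\lceil\nicefrac{n}{|\I|}\rceil$. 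For the lower bound I would simply invoke \cref{corr:lower_bound}, which gives $\OPT\geq q\cdot\lceil\nicefrac{n}{\alpha_{k+1}}\rceil$.

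The only step that needs a short argument is comparing $\lceil\nicefrac{n}{|\I|}\rceil$ with $\lceil\nicefrac{n}{\alpha_{k+1}}\rceil$. From $|\I|\geq\nicefrac{\alpha_{k+1}}{\gamma}$ one gets $\nicefrac{n}{|\I|}\leq\gamma\cdot\nicefrac{n}{\alpha_{k+1}}\leq\gamma\cdot\lceil\nicefrac{n}{\alpha_{k+1}}\rceil$, and since $\gamma\geq1$ and $\lceil\nicefrac{n}{\alpha_{k+1}}\rceil\geq1$, rounding up costs at most an additional $1\leq\gamma\cdot\lceil\nicefrac{n}{\alpha_{k+1}}\rceil$; hence $\lceil\nicefrac{n}{|\I|}\rceil\leq 2\gamma\cdot\lceil\nicefrac{n}{\alpha_{k+1}}\rceil$. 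Combining the three estimates yields $\|\Pi\|\leq 2\gamma\cdot\frac{q+\lfloor\nicefrac{p}{\lb{}}\rfloor\lb{}}{q}\cdot\OPT=2\gamma\bigl(1+\lfloor\nicefrac{p}{\lb{}}\rfloor\cdot\nicefrac{\lb{}}{q}\bigr)\cdot\OPT$, and since $\lfloor\nicefrac{p}{\lb{}}\rfloor\lb{}\leq p<q$ the quantity $\lfloor\nicefrac{p}{\lb{}}\rfloor\cdot\nicefrac{\lb{}}{q}$ is strictly less than $1$, so the guarantee stays below $4\gamma$.

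I do not anticipate a genuine obstacle: feasibility of the stacked $(k+1)$-patterns and the bound on a pattern's length are inherited unchanged from the maximum-IS case treated in \cref{thm:approximation_identical_short_max}, so the only genuinely new content is the elementary ceiling estimate above, whose sole subtlety is to absorb the $+1$ rounding loss into the factor $2\gamma$ rather than ending up with a bound of the form $2\gamma+1$.
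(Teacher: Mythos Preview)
Your proposal is correct and follows essentially the same approach as the paper: the same pattern-stacking construction on the given $(k+1)$-IS, the same upper bound $\|\Pi\|\leq(q+\lfloor p/\lb{}\rfloor\lb{})\lceil n/\beta\rceil$, and the same lower bound via \cref{corr:lower_bound}. The only cosmetic difference is in how the ceiling is controlled: the paper splits into the cases $n\leq\beta$ (where both ceilings equal~$1$ and the ratio is in fact below~$2$) and $n>\beta$ (where $\lceil n/\beta\rceil\leq 2n/\beta\leq 2\gamma\, n/\alpha_{k+1}$), whereas you avoid the case distinction by absorbing the rounding via $\lceil x\rceil\leq x+1$ together with $1\leq\gamma\lceil n/\alpha_{k+1}\rceil$.
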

An analogous result for long blocking times is given in \cref{theorem:ApprIndepSetLong}(\ref{item:approxi}).
\begin{proof} 
	Consider an instance \ScMIDSh and let $k = \lceil\nicefrac{p}{\lb{}}\rceil$. Moreover, let $\mathcal I$ be the given $(k+1)$-IS of size $\beta\geq \nicefrac{\alpha_{k+1}}{\gamma}$. We construct a schedule $\Pi$ by using $\left\lceil\nicefrac{n}{\beta}\right\rceil$ many $(k+1)$-patterns on $\mathcal I$, yielding 
	$
	\|\Pi\|\leq(q+\lfloor \nicefrac{p}{\lb{}}\rfloor \lb{})\cdot \left\lceil\nicefrac{n}{\beta}\right\rceil
	$. Moreover, \cref{corr:lower_bound} implies $\OPT \geq q\cdot \left\lceil\nicefrac{n}{\alpha_{k+1}}\right\rceil\geq q\cdot \nicefrac{n}{\alpha_{k+1}}$.
	If $n\leq \beta$, then also $n\leq \alpha_{k+1}$ and we obtain \[\frac{\|\Pi\|}{\OPT}\leq\frac{(q+\lfloor \nicefrac{p}{\lb{}}\rfloor \lb{})\cdot \left\lceil\nicefrac{n}{\beta}\right\rceil}{q\cdot \left\lceil\nicefrac{n}{\alpha_{k+1}}\right\rceil}=  \frac{(q+\lfloor \nicefrac{p}{\lb{}}\rfloor\lb{})}{q}<2.
	\] If $n > \beta$, we obtain $\left\lceil\nicefrac{n}{\beta}\right\rceil \leq  2\cdot\nicefrac{n}{\beta} \leq  2\gamma\cdot\nicefrac{n}{\alpha_{k+1}}$. Consequently, it holds
	\[
	\frac{\|\Pi\|}{\OPT}\leq \frac{(q+\lfloor \nicefrac{p}{\lb{}}\rfloor\lb{})\cdot 2\gamma \cdot \nicefrac{n}{\alpha_{k+1}}}{q\cdot \nicefrac{n}{\alpha_{k+1}}} \leq 2\gamma\cdot\left(1+\lfloor \nicefrac{p}{\lb{}}\rfloor \cdot \nicefrac{\lb{}}{q}\right) < 4\gamma. \qedhere
	\]
\end{proof}

\section{Unit Jobs}\label{sec:unit}
In this section, we consider the special case \UnitP  in which we are given $n$ identical unit jobs where $\lb{}=p=\rb{}=1$ for all jobs. 
By \cref{theorem:hardness_identical_short_blocking_time}, there exists no $\mathcal{O}(m^{1-\varepsilon})$-approximation for \UnitP on general graphs, unless $P=NP$.
 In the following, we consider three graph classes for which \UnitP can be solved in polynomial time. Specifically, we study complete graphs, stars, and bipartite graphs. 
 
 At several places, we use the fact that  when all parameters are integers, there exists an optimal schedule with integral starting times.
 
 \begin{restatable}{lemma}{integralSchedules}\label{lem:integral}
 	If $\lb{j}$, $p_j$, $\rb{j}$ are integral for all $ j\in \J$, every feasible schedule $\Pi$ can be transformed into a feasible schedule $\Pi^*$ such that the starting time of each job is intergral and the makespan does not increase, i.e.,  $\|\Pi^*\|\leq \|\Pi\|$. 
 \end{restatable}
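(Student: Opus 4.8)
The plan is to show that the required schedule $\Pi^*$ is obtained from $\Pi$ by a single \emph{floor operation}: keep every job on the same machine as in $\Pi$, but round each starting time down, i.e.\ set $S_j^{\Pi^*} := \lfloor S_j^{\Pi}\rfloor$ for all $j\in\J$. Since starting times are non-negative, these are non-negative integers, so $\Pi^*$ has integral starting times as required; the remaining work is to check that $\Pi^*$ is still a feasible schedule and that $\|\Pi^*\|\leq\|\Pi\|$.

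The makespan bound is immediate: because $q_j=\lb{j}+p_j+\rb{j}$ is integral, the new completion time of $j$ satisfies $C_j^{\Pi^*}=\lfloor S_j^{\Pi}\rfloor+q_j=\lfloor S_j^{\Pi}+q_j\rfloor\leq S_j^{\Pi}+q_j=C_j^{\Pi}$, and taking the maximum over all jobs gives $\|\Pi^*\|\leq\|\Pi\|$.

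For feasibility I would rely on the elementary observation that two open intervals $(a,b)$ and $(c,d)$ fail to overlap in their interiors if and only if $b\leq c$ or $d\leq a$ (otherwise $\max\{a,c\}<\min\{b,d\}$ and the interiors intersect). The point is that every interval endpoint occurring in the feasibility conditions of $\Pi$ — namely $S_j^{\Pi}$ and $C_j^{\Pi}$ for the running interval, and $S_j^{\Pi}+\lb{j}$ and $C_j^{\Pi}-\rb{j}$ for the two blocking intervals — is of the form $S_j^{\Pi}+z$ with $z\in\{0,\lb{j},\lb{j}+p_j,q_j\}$ integral. Hence, using monotonicity of $\lfloor\cdot\rfloor$ together with $\lfloor x+z\rfloor=\lfloor x\rfloor+z$ for integral $z$, any separating inequality $S_j^{\Pi}+z\leq S_{j'}^{\Pi}+z'$ valid in $\Pi$ survives the rounding: $S_j^{\Pi^*}+z=\lfloor S_j^{\Pi}+z\rfloor\leq\lfloor S_{j'}^{\Pi}+z'\rfloor=S_{j'}^{\Pi^*}+z'$. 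Applying this to the running intervals of two jobs on the same machine (which may not overlap interiorly), and to each of the at most four pairs of blocking intervals of two jobs on conflicting machines, shows that no forbidden overlap is introduced, so $\Pi^*$ is conflict-free.

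I do not expect a genuine obstacle here; the only thing to be careful about is the bookkeeping in the last paragraph — systematically listing the relevant endpoints in the form $S_j^{\Pi}+z$ with $z$ integral and confirming that each of the feasibility constraints of $\Pi$ is indeed witnessed by such a separating inequality, so that the closure property of $\lfloor\cdot\rfloor$ under adding integers can be applied uniformly.
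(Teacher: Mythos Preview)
Your proposal is correct and matches the paper's approach: both define $\Pi^*$ by $S_j^{\Pi^*}:=\lfloor S_j^{\Pi}\rfloor$ and use that all relevant interval endpoints are of the form $S_j^{\Pi}+z$ with $z$ integral, so the floor operation preserves the separating inequalities. The paper phrases the feasibility check contrapositively (an overlap of at least one time unit in $\Pi^*$ forces an overlap in $\Pi$), while you argue directly that each separating inequality survives rounding; these are the same argument, and your version is in fact slightly more complete since you also explicitly cover the single-machine non-overlap condition.
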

 
 \begin{proof}
 	Let $S_j^{\Pi}$ denote the starting time for each job $j\in J$. We define $\Pi^*$ by  $S_j^{\Pi^*}:=\lfloor S_j^{\Pi}\rfloor$ for each job $j$. It remains to show that $\Pi^*$ is feasible. Let $j$ and $j'$ be two jobs such that two of their blocking times $b$ and $b'$ intersect in ${\Pi^*}$. By integrality of $\Pi^*$ and the blocking times, they intersect in at least one time unit. 
 	
 	Without loss of generality, we assume that $b'$ does not start before $b$ in $\Pi$. With slight abuse of notation, $S_{b}^\Pi$  denotes the starting time of the blocking time $b$ in $\Pi$. Then, $S_{b'}^\Pi-S_{b}^\Pi$ and $S_{b'}^{\Pi^*}-S_{b}^{\Pi^*}$ differ by strictly less than 1. Hence, if they intersect in at least 1 time unit in $\Pi^*$, then they also intersect in $\Pi$. Therefore, by feasibility of $\Pi$, $j$ and $j'$ are scheduled on conflict-free machines. 
 \end{proof}
 
 \subsection{Complete graphs}
 
 Complete graphs play a special role in the context when machines share a single resource, e.g., a resource  for pre- and post-processing which can only be accessed by a single machine at once.

  \begin{restatable}{lemma}{completeGraph}\label{lem:UNITcomplete}
  	For every $n$, an optimal schedule for \UnitP{$(K_m,n)$} can be computed in time linear in $\log n$. In particular, for $m\geq 2$, it coincides with an optimal schedule for $K_2$ of makespan $4\lfloor\nicefrac{n}{2}\rfloor+3(n \mod 2)$. 
  \end{restatable}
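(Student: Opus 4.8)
The plan is to show two things: first, that no optimal schedule for \UnitP$(K_m,n)$ can do better than a schedule using only two machines, and second, that on $K_2$ the optimal makespan is exactly $4\lfloor\nicefrac{n}{2}\rfloor+3(n\bmod 2)$, computable in $\mathcal{O}(\log n)$ time (by just plugging $n$ into the closed form, which requires only computing $\lfloor n/2\rfloor$ and $n\bmod 2$).

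For the reduction to two machines, the key observation is that in $K_m$ every pair of machines is in conflict, so the blocking intervals of \emph{all} scheduled jobs must be pairwise non-overlapping in their interiors — the conflict structure is global. Since every job is a unit job with system time $q=3$ and blocking pattern (block, process, block) of lengths $(1,1,1)$, by \cref{lem:integral} we may assume all starting times are integral. I would argue that at any integral time, the set of ``blocked'' unit slots forced by the scheduled jobs behaves exactly as if all jobs were placed on a single conflict clique, so the number of machines is irrelevant beyond having at least two available: the only freedom $m\geq 2$ buys over $m=1$ is the ability to interleave two jobs so that the middle (processing) unit of one job overlaps a blocking unit of another. Concretely, on one machine a job occupies time units $[t,t+3)$ and fully blocks $[t,t+1)\cup[t+2,t+3)$; a second job on a conflicting machine may start at $t+1$, using its processing unit $[t+2,t+3)$ against the first job's second blocking unit — wait, that overlaps blockings, so instead the valid interleaving is: job on machine $1$ occupies $[0,3)$, job on machine $2$ occupies $[2,5)$? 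That overlaps blocking $[2,3)$ with blocking $[2,3)$. The actually valid pattern, which the authors surely exploit, is that a second job's \emph{processing} unit (non-blocking) can sit inside a gap; so two jobs fit in $4$ time units: machine $1$ runs $[0,3)$, machine $2$ runs $[1,4)$ — here blocking of job $1$ is $[0,1)\cup[2,3)$ and blocking of job $2$ is $[1,2)\cup[3,4)$, which are interior-disjoint, and the processing units $[1,2)$ of job $1$ and $[2,3)$ of job $2$ overlap the other's blockings but that's allowed since processing units carry no conflict constraint. So two jobs use $4$ units, and adding a third job cannot help because after time $4$ everything is free again, so one repeats the pattern. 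This shows pairs of jobs cost $4$ and a lone leftover job costs $3$.

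The main obstacle, and the part deserving the most care, is the matching lower bound: proving that one genuinely cannot pack jobs more densely, i.e., that $k$ jobs require makespan at least $4\lfloor k/2\rfloor + 3(k\bmod 2)$, and in particular that on more than two machines one cannot beat the $K_2$ bound. Here I would use a charging / interval-counting argument on the blocked time. Each job contributes two disjoint unit blocking intervals, and on a clique all blocking intervals over all jobs are pairwise interior-disjoint, so the total measure of blocked time is exactly $2n$ and it sits inside $[0,\|\Pi\|)$; that alone only gives $\|\Pi\|\geq 2n$, too weak. The refinement is to observe that the processing unit of any job, being of length $1$, can ``absorb'' at most one unit of another job's blocking, and each blocking unit is absorbed by at most one processing unit (since processing units on a single machine are disjoint and there is only the obstruction from the conflict graph's edges). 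A clean way: define, for the optimal integral schedule, the set of used integer time-slots $[t,t+1)$; argue each such slot is the processing unit of at most one job (since two jobs' processing units overlapping would need their surrounding blockings to nest, impossible with equal unit lengths), hence at least $n$ of the $\|\Pi\|$ unit slots are pure processing and the $2n$ blocking units must fit in the remaining $\|\Pi\|-n$ slots with multiplicity one — wait, blocking units are also interior-disjoint so they need $2n$ slots, giving $\|\Pi\| \geq 3n$, again too weak. The correct accounting must be per machine or via a more careful structural lemma about how jobs on distinct machines can interleave in pairs but no tighter. I would prove a local lemma: in any conflict-free integral schedule on a clique, for any window of $4$ consecutive integer time units, at most $2$ job-\emph{starts} occur within any $3$-unit window, or more precisely the jobs can be linearly ordered by start time and consecutive jobs in this order have starts differing by at least $2$, with the difference being exactly $2$ achievable only in alternating fashion — formalizing that ``$3,2,3,2,\dots$'' start-gaps are forced, from which the makespan bound follows by summation. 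Establishing this forced alternation rigorously — ruling out three jobs with pairwise start-gap $2$ — is the crux, and it follows because three such jobs starting at $t,t+2,t+4$ would force the middle one's blocking $[t+2,t+3)\cup[t+4,t+5)$ to collide with the first job's $[t+2,t+3)$ and the third's $[t+4,t+5)$. Once the alternation is pinned down, both the lower bound and the explicit optimal schedule on $K_2$ follow, and since $m\geq 2$ suffices to realize the alternating pattern while $m\geq 3$ provides no extra power (the constraint is global on a clique), the reduction to $K_2$ is complete.
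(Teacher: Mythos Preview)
Your reduction step---observing that on a clique all blocking intervals across all jobs are pairwise interior-disjoint, so the conflict structure is global---is correct and is essentially what the paper exploits too, though the paper phrases it more structurally: at any point in time the set of active machines induces a bipartite subgraph of $K_m$, hence at most two machines are active simultaneously, and by vertex-transitivity one may shift all jobs onto two fixed machines. Your upper-bound construction (jobs at times $0$ and $1$ on two machines, then repeat every $4$ time units) is also correct.

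The lower-bound argument, however, contains a genuine error. You assert that consecutive start times in sorted order differ by at least~$2$ and that the forced gap pattern is ``$3,2,3,2,\ldots$''. Both claims are false and contradict your own construction, whose gap pattern is $1,3,1,3,\ldots$. In fact a gap of exactly~$2$ is \emph{never} feasible: if two jobs start at $t$ and $t+2$, the second blocking interval $[t+2,t+3)$ of the first coincides with the first blocking interval of the second. So your check that three jobs at $t,t+2,t+4$ collide is correct but beside the point---already two jobs at distance~$2$ collide, while two jobs at distance~$1$ do not. The actual combinatorial constraint (after invoking \cref{lem:integral}) is that the multiset of start times is a set of non-negative integers no two of which differ by exactly~$2$; equivalently, every window of four consecutive integers contains at most two start times, and the claimed makespan bound follows by a short counting argument. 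The paper itself is terse here (``it is easy to check that two jobs are optimally processed in time~$4$. This implies the claim''), so a corrected version of your direct counting would actually be more explicit than what the paper provides.
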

\begin{proof}
	Let $\Pi$ be an optimal schedule of \UnitP{$(K_m,n)$}. Note that for each point in time~$t$, the set of machines processing a job at time $t$ induces a $K_1$ or $K_2$. Moreover, all vertices in $K_m$, $m\geq 2$, play the same role and hence we may shift all jobs to the same two vertices. Thus, we may reduce our attention to \UnitP{$(K_2,n)$}. It is easy to check that two jobs are optimally processed in time 4. This implies the claim.
\end{proof}

\subsection{Bipartite Graphs}\label{sec:complete_bipartite}
In this subsection, we show how to compute optimal schedules on bipartite graphs in polynomial time.

 Studying bipartite graphs in this setting is interesting due to several properties: Firstly, we can compute 1,2-IS of bipartite graphs in polynomial time. Consequently, there is an indication that the problem is easier on bipartite graphs, e.g.,  \cref{thm:approximation_identical_short_max} yields a $\nicefrac{4}{3}$-approximation for bipartite graphs.
Secondly, for every point in time, the set of active machines induces a bipartite graph (\cref{prop:bip}). Hence, understanding optimal schedules of bipartite graphs also yield  \emph{local optimality criteria} of schedules for all graphs.

  \begin{restatable}{observation}{BipSubgraphs}\label{prop:bip}
  	Consider an instance \UnitP on a graph $G$ and a feasible schedule~$\Pi$.
  	For every point in time $t$, the set of machines processing a job at time $t$ in $\Pi$ induces a bipartite subgraph of $G$.  
  \end{restatable}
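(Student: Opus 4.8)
The plan is to exhibit an explicit proper $2$-colouring of the machines that are busy at time~$t$, read off from the offsets of the running jobs relative to~$t$. First I would fix a feasible schedule~$\Pi$ and a time $t\ge 0$, and let $M$ be the set of machines that are busy at~$t$, i.e.\ on which some job $j$ is running, so that $t\in(S_j^{\Pi},C_j^{\Pi})$. Since $\lb{}=p=\rb{}=1$, every job has \qtime $q=3$, hence a machine $i\in M$ carries a unique running job $j_i$ with $S_{j_i}^{\Pi}<t<S_{j_i}^{\Pi}+3$; I set $x_i:=t-S_{j_i}^{\Pi}\in(0,3)$.

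The crux is the claim that whenever $ii'\in E$ with $i,i'\in M$, the offsets differ by exactly one; more precisely, if $x_i\ge x_{i'}$ then $x_{i'}=x_i-1$, $x_i\in(1,3)$ and $x_{i'}\in(0,2)$. To prove it I would assume $S_{j_i}^{\Pi}\le S_{j_{i'}}^{\Pi}$. Since the two first blocking intervals $(S_{j_i}^{\Pi},S_{j_i}^{\Pi}+1)$ and $(S_{j_{i'}}^{\Pi},S_{j_{i'}}^{\Pi}+1)$ must not overlap in their interiors, we get $S_{j_{i'}}^{\Pi}\ge S_{j_i}^{\Pi}+1$; since the second blocking interval $(S_{j_i}^{\Pi}+2,S_{j_i}^{\Pi}+3)$ of $j_i$ must not overlap the first blocking interval of $j_{i'}$ in their interiors, we get $S_{j_{i'}}^{\Pi}\le S_{j_i}^{\Pi}+1$ or $S_{j_{i'}}^{\Pi}\ge S_{j_i}^{\Pi}+3$, and the latter is impossible because it would force $t<S_{j_i}^{\Pi}+3\le S_{j_{i'}}^{\Pi}<t$. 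Hence $S_{j_{i'}}^{\Pi}=S_{j_i}^{\Pi}+1$, so $x_{i'}=x_i-1$, and $x_i,x_{i'}\in(0,3)$ then forces $x_i\in(1,3)$ and $x_{i'}\in(0,2)$.

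Given the claim, I would colour $i\in M$ \emph{red} if $x_i\in(0,1]\cup(2,3)$ and \emph{blue} if $x_i\in(1,2]$; these classes partition $M$ since $(0,3)$ is their disjoint union. For an edge $ii'\in E$ inside $M$ with $x_i=x_{i'}+1$: if $x_i\in(1,2]$ then $i$ is blue and $x_{i'}=x_i-1\in(0,1]$ is red, while if $x_i\in(2,3)$ then $i$ is red and $x_{i'}\in(1,2)$ is blue. In either case the endpoints receive different colours, so the colouring is a proper $2$-colouring of $G[M]$ and $G[M]$ is bipartite (and the case $t$ before all starts or after all ends gives $M=\emptyset$, trivially bipartite).

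The only mildly delicate points are the boundary offsets $x_i\in\{1,2\}$ (where the running job sits exactly at a junction between a blocking interval and the processing interval), which the half-open choice of the colour classes absorbs, and the apparent case $x_i=x_{i'}$ for an edge, which cannot occur because equal offsets would make the two first blocking intervals coincide, violating feasibility. I expect no genuine obstacle beyond laying out these case distinctions cleanly; in particular I would run the argument at the level of real-valued offsets rather than invoking \cref{lem:integral}, since the statement concerns the given schedule~$\Pi$ and not merely an optimal one.
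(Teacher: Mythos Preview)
Your argument is correct. The key step---showing that two conflicting machines both running at time~$t$ must have starting times differing by exactly~$1$---is cleanly derived from the two relevant blocking constraints, and your explicit $2$-colouring based on the offset $x_i$ is valid (the half-open convention handles the boundary cases $x_i\in\{1,2\}$ as you note).

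The paper states this result as an Observation and gives no proof, so there is nothing to compare against directly. That said, once you have established that adjacent active machines satisfy $|x_i-x_{i'}|=1$, a slightly shorter finish is available: along any cycle in $G[M]$ the successive offset differences are each $\pm 1$ and must sum to zero, so the cycle length is even; hence $G[M]$ has no odd cycle and is bipartite. This bypasses the explicit colouring and its attendant case analysis, though your construction has the merit of exhibiting the bipartition concretely. Your closing remark about the case $x_i=x_{i'}$ is harmless but redundant, since your claim already forces the difference to equal~$1$.
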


We derive our polynomial time algorithm for bipartite graphs in three steps.
Firstly, we derive a structural property of optimal schedules on stars (\cref{sec:stars}).
Then, we exploit these structural insights for general bipartite graphs by considering a subgraph whose components are stars (\cref{sec:general_bipartite}). 
The special structure is a slightly relaxed notion of 1- and 2-patterns.\medskip

\noindent \textbf{A/B-patterns.}
An \emph{A-pattern} on a graph $H$ is a 
 a $1$-pattern on some maximum $1$-IS of $H$.
 A \emph{B-pattern} on $H$ is a $2$-pattern on some maximum $2$-IS of $H$. Note that the difference to 1- and 2-patterns is  that we do not specify the 1- and 2-ISs. 

An \emph{AB-schedule} on $H$ consists of  A- and/or B-patterns.
Let $n\in\mathbb N$. We say \UnitP{$(H,n)$} admits an optimal AB-schedule if there exists an optimal schedule that can be transformed into an AB-schedule on $H$ by possibly adding more jobs without increasing the makespan.

\subsubsection{Stars}\label{sec:stars}
In this subsection, we describe a polynomial time algorithm to find optimal schedules on stars. 
A \emph{star} is a complete bipartite graph $S_{\ell}:=K_{1,\ell}$ on $\ell+1\geq 2$ vertices. For $\ell\geq 2$, $S_{\ell}$ has $\ell$ leaves and a unique \emph{center} of degree $\ell$. For $\ell=1$, either vertex can be seen as the center of~$S_{1}$.

\begin{restatable}{lemma}{BipartitePatterns}\label{lemma:Bipartite_Patterns}
	For every star $S$ and every $n$, \UnitP{$(S,n)$} admits an optimal AB-schedule.
\end{restatable}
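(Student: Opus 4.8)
The plan is to analyze an arbitrary optimal schedule $\Pi$ for \UnitP$(S,n)$ on a star $S=S_\ell$ and show it can be reorganized, possibly after adding jobs, into a concatenation of A- and B-patterns without increasing the makespan. By \cref{lem:integral} I may assume all starting times are integral, so every job occupies a length-$3$ integer interval $[S_j,S_j+3)$ and its two blocking times are the unit intervals $[S_j,S_j+1)$ and $[S_j+2,S_j+3)$. The key structural fact is \cref{prop:bip} specialized to a star: at any time $t$, the active machines induce a bipartite subgraph of $S$, i.e., either an independent set of leaves, or the center together with leaves — but the center's blocking times cannot overlap any leaf's blocking times. So I want to understand, slot by slot, when the center is "busy".

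First I would partition the time axis into unit slots $[t,t+1)$ and classify each slot by whether the center is blocking during it. The core claim is a local one: consider a maximal run of consecutive slots during which some job runs on the center. Because on the center consecutive jobs must be separated (the second blocking time of one job and the first of the next cannot overlap a leaf job, but more importantly two center jobs sharing the center cannot interleave their blocking times), I expect the center's jobs to line up in a controlled pattern. The main work is a local exchange argument: I would show that around each center job one can always assume the leaves are scheduled so that the whole local configuration is exactly a $1$-pattern on a maximum $1$-IS (an A-pattern: center job plus leaves all starting simultaneously is impossible since center and leaves conflict, so actually an A-pattern here means all $\ell$ leaves running in parallel while the center is idle) or a $2$-pattern on a maximum $2$-IS (a B-pattern: the center and the leaves split into two shifted independent sets — center starts at $t$, leaves start at $t+1$, or vice versa). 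For a star the only two "maximal" ways to pack machines are: all leaves at once (A), or center-then-leaves / leaves-then-center shifted by one (B). I would argue by a shifting/compression argument that any optimal schedule can be brought to a form where within each time window of length $q+1$ the jobs form either an A- or a B-pattern, filling idle leaf machines with extra jobs.

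Concretely, the steps I would carry out: (1) invoke integrality and reduce to integer slots; (2) classify each maximal "center-active" block and show its length is of the form $q+(c-1)$ for $c\in\{1,2\}$ after local normalization — using that $\lfloor p/\lb{}\rfloor+1 = 2$ here so only $1$- and $2$-patterns are available; (3) for each center-idle block, show all $\ell$ leaves can be run in parallel (forming A-patterns) and the block decomposes into A-patterns after compaction; (4) for each center-active block, do the exchange argument placing it as a B-pattern on $S$ (center plus all leaves, split $1:\ell$ or appropriately), again filling idle leaves with added jobs; (5) concatenate and check the makespan did not increase because each local replacement only adds jobs and never lengthens the schedule — pattern lengths match $q$ or $q+1$ exactly, and the number of patterns needed is governed by \cref{corr:lower_bound} / the counting in \cref{Lemma:Maximum_number_jobs_in_interval_arbitrary_schedule,lem:startingJobs}.

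The main obstacle I anticipate is the exchange argument in step (4): showing that a messy center-active region — where center jobs and leaf jobs are interleaved at various offsets — can be cleaned up into clean B-patterns without a domino effect that pushes the makespan up. The subtlety is that shifting one center job to align it into a B-pattern may force shifts of adjacent leaf and center jobs; I would handle this by processing the center's jobs left to right and maintaining the invariant that everything to the left is already in A/B-pattern form, using the half-open length-$q$ interval counting bound from \cref{Lemma:Maximum_number_jobs_in_interval_arbitrary_schedule} (with $k=1$, so at most $\beta_2^\Pi$ jobs start in any length-$q$ window) to guarantee there is always enough "room" to realign. A secondary obstacle is correctly handling the degenerate case $\ell=1$ (where either endpoint is the center) and short schedules with only one pattern, but those are checked directly.
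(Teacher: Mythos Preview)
You are working far too hard, and the obstacle you flag in step~(4) is genuine: your left-to-right realignment sketch gives no concrete reason why repeatedly shifting center jobs into clean B-patterns cannot cascade and push the makespan up. The appeal to \cref{Lemma:Maximum_number_jobs_in_interval_arbitrary_schedule} only bounds how many jobs \emph{start} in a window of length~$q$; it does not by itself supply the slack your shifts would need.

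The paper's proof is a two-line symmetry argument that you never exploit. The leaves of $S_\ell$ are completely interchangeable: each leaf conflicts only with the center and with no other leaf. So take any optimal schedule, pick a leaf that processes a maximum number of jobs, and copy that leaf's schedule onto every other leaf. The result is still conflict-free (each leaf still only has to avoid the center's blocking times, which are unchanged), the makespan is unchanged, and the number of jobs can only go up. Now all $\ell$ leaves run in lockstep. For each synchronized leaf job one simply asks whether a center job overlaps its processing time: if not, the $\ell$ leaves together form an A-pattern; if yes, the center together with the $\ell$ leaves form a B-pattern. The case $\ell=1$ is immediate. That is the entire argument.

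What the symmetry trick buys compared to your route: no integer-slot partition, no center-active versus center-idle block analysis, no exchange argument, no left-to-right invariant, and no use of the counting lemmas \cref{Lemma:Maximum_number_jobs_in_interval_arbitrary_schedule,lem:startingJobs,corr:lower_bound}. The only structural fact about stars that is needed is that the leaves are symmetric.
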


\begin{proof}
	For simplicity of the presentation, we first consider a star $S_\ell$ with $\ell\geq 2$. We show that there exists an optimal schedule of \UnitP{$(S,n)$} to which we can possibly add jobs (without increasing the makespan) such that all $\ell$ leaves have the same induced schedule. To this end, we can consider an optimal schedule and determine a leaf processing a maximum number of jobs. Changing all leaves to this schedule may only increase the number of processed jobs and yields a valid schedule.
	
	Similarly, the above property holds trivially for $S_1$ by replacing leaf with non-center. 
	
	Finally, for each job $j$ processed on a leave (non-center) of $S_\ell$, there exist two cases: If no job is processed on the center during the processing time of $j$, we obtain an A-pattern. Otherwise, we obtain a B-pattern.
 $\mathcal{M}_3=\emptyset$, then no job is running/processing at $t+4$.
\end{proof}

\begin{corollary}\label{cor:stars}
	For every star $S$  and every $n$, an optimal schedule for \UnitP{$(S,n)$} can be computed in time linear in $\log n$ and $|S|$.
	
	Specifically, for every $S$, there exists a $X\in\{A,B\}$ such that an optimal schedule has at most 2 $X$-patterns, i.e., an optimal schedule has makespan
	\[
	\min_{k=0,1,2} \left\{4\bigg\lceil\frac{n-k\ell}{\ell+1}\bigg\rceil+3k, 3\bigg\lceil\frac{n-k(\ell+1)}{\ell}\bigg\rceil+4k \right\},
	\]
	where $\lceil\cdot \rceil$ denotes the usual ceiling function; however, for negative reals it evaluates to 0.
\end{corollary}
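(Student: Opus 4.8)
The plan is to derive \cref{cor:stars} directly from \cref{lemma:Bipartite_Patterns} together with the integrality result \cref{lem:integral}. First I would fix a star $S=S_\ell$ with $\ell\geq 1$ and invoke \cref{lemma:Bipartite_Patterns} to conclude that \UnitP{$(S,n)$} admits an optimal AB-schedule; by \cref{lem:integral} we may also assume all starting times are integral. An AB-schedule consists solely of A- and B-patterns, which are vertex-disjoint in time on the center: an A-pattern occupies the center-free configuration (all $\ell+1$ vertices run a job, i.e.\ the maximum $1$-IS is the whole vertex set, processing $\ell+1$ jobs in $q=3$ time units once we account for the shared block structure) and a B-pattern uses a maximum $2$-IS, namely the center together with all leaves as two independent sets, processing $\ell$ jobs from the leaf-side staggered against $1$ job on the center — wait, I should be careful: a B-pattern on $S_\ell$ places one job on each of the $\ell$ leaves starting at time $t$ and one job on the center starting at $t+\lb{}=t+1$, for a total of $\ell+1$ jobs in $q+\lb{}=4$ time units. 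An A-pattern is a $1$-pattern on the maximum $1$-IS of the star; the maximum independent set of $S_\ell$ is the set of $\ell$ leaves, giving $\ell$ jobs in $q=3$ time units. So if a schedule uses $a$ A-patterns and $b$ B-patterns it processes $a\ell + b(\ell+1)$ jobs in makespan $3a+4b$.

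Next I would set up the optimization: minimize $3a+4b$ over nonnegative integers $a,b$ with $a\ell + b(\ell+1)\geq n$ (adding dummy jobs is allowed since the lemma permits adding jobs without increasing makespan). The key reduction is that in an optimal solution we never need more than $2$ of whichever pattern type has the "worse" throughput-per-time-unit ratio — more precisely, one can always replace $\ell+1$ A-patterns by $\ell$ B-patterns (both process $(\ell+1)\ell$ jobs, costing $3(\ell+1)$ versus $4\ell$ time, so this helps exactly when $3(\ell+1)\leq 4\ell$, i.e.\ $\ell\geq 3$) and conversely replace $\ell$ B-patterns by $\ell+1$ A-patterns when $\ell\leq 3$. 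Hence for $\ell\geq 3$ an optimal schedule uses at most $\ell$ B-patterns but we can do better: exchanging also works at the level of "one B-pattern for one A-pattern plus leftover", and the cleanest bound is that we need at most $2$ of one type. I would make this precise by arguing: fix the count $k\in\{0,1,2\}$ of the minority pattern type, then the remaining jobs are packed as tightly as possible by the majority type, giving the two families $4\lceil (n-k\ell)/(\ell+1)\rceil + 3k$ (here the $k$ minority patterns are A-patterns contributing $3k$ time and $k\ell$ jobs, the rest B-patterns) and $3\lceil(n-k(\ell+1))/\ell\rceil + 4k$ (the $k$ minority patterns are B-patterns). Taking the minimum over $k\in\{0,1,2\}$ and over the two families yields the displayed formula, with the convention $\lceil x\rceil = 0$ for $x\leq 0$ handling the case $n\leq k\ell$ or $n\leq k(\ell+1)$.

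The main obstacle, and the part deserving genuine care, is justifying that restricting the minority-type count to $k\leq 2$ is without loss of generality — i.e.\ that three or more of the "inefficient" pattern type can always be exchanged for a cheaper combination of the other type (possibly over-processing by at most a bounded number of jobs, which is harmless). For $\ell\geq 3$: if an optimal AB-schedule has $a\geq 3$ A-patterns, replacing $3$ A-patterns ($3\ell$ jobs, $9$ time) by appropriately many B-patterns requires $\lceil 3\ell/(\ell+1)\rceil = 3$ B-patterns ($\geq 3\ell$ jobs, but exactly $12$ time), which is worse — so that direction does not immediately work, and one must instead exchange $\ell+1$ A-patterns at a time. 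The honest statement is: the integer program $\min\{3a+4b : a\ell+b(\ell+1)\geq n,\ a,b\geq 0\}$ has an optimum in which $\min\{a,b\}$ is small, because if both $a\geq \ell$ and $b\geq \ell+1$... no — the right pigeonhole is that if $a\geq \ell+1$ we can replace $\ell+1$ A-patterns by $\ell$ B-patterns whenever $\ell\geq 3$ (strictly beneficial) and by equality when $\ell=3$; symmetrically if $b\geq \ell$ and $\ell\leq 3$. So after exhausting such swaps, for $\ell\geq 4$ we get $a\leq \ell$, which is not yet $\leq 2$. I would resolve this by a direct case analysis on the residue of $n$ modulo $\ell$ and $\ell+1$ and a short argument that, among all $(a,b)$ achieving the optimal value $3a+4b$, one with $\min\{a,b\}\leq 2$ exists — essentially because increasing the majority count by one and decreasing... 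Concretely: since $\gcd$ considerations show the only trade is "$+(\ell+1)$ A's $\leftrightarrow$ $-\ell$ B's at equal job count and time-difference $-1$ (for $\ell=3$) or sign-definite otherwise", the extreme points of the optimal face have one coordinate in a bounded range; I would verify $\{0,1,2\}$ suffices by checking the boundary residues explicitly. The running time claim is then immediate: evaluating the six candidate expressions costs $O(\log n)$ arithmetic (the numbers have $O(\log n)$ bits) plus $O(|S|)$ to read $\ell$, matching the stated bound and \cref{lem:UNITcomplete}'s flavor.
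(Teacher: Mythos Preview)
Your overall strategy matches the paper's: invoke \cref{lemma:Bipartite_Patterns} to reduce to AB-schedules, observe that an A-pattern contributes $\ell$ jobs in time $3$ and a B-pattern $\ell+1$ jobs in time $4$, set up the integer program $\min\{3a+4b:a\ell+b(\ell+1)\geq n\}$, and then argue by an exchange that one of $a,b$ can be taken at most~$2$.

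The gap is in the exchange step, which you do not successfully carry out. You attempt swaps of the form ``$\ell+1$ A-patterns $\leftrightarrow$ $\ell$ B-patterns'' (equal job count, time $3(\ell+1)$ versus $4\ell$), but you have the direction of benefit reversed: for $\ell\geq 3$ this swap \emph{increases} time when going from A to B, so it does not help bound the number of A-patterns, and in any case it only yields $a\leq \ell$ or $b\leq \ell+1$, not $\leq 2$. You then retreat to an unspecified residue analysis. The paper's argument is much more direct and is exactly what you are missing: for $\ell\geq 3$, three B-patterns ($3(\ell+1)$ jobs, time $12$) can be replaced by four A-patterns ($4\ell\geq 3(\ell+1)$ jobs, time $12$), so any optimal AB-schedule has at most two B-patterns; for $\ell\leq 2$, three A-patterns ($3\ell$ jobs, time $9$) can be replaced by two B-patterns ($2(\ell+1)\geq 3\ell$ jobs, time $8$), so at most two A-patterns. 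These two concrete swaps immediately give $k\in\{0,1,2\}$ with $X=B$ for $\ell\geq 3$ and $X=A$ for $\ell\leq 2$, and the formula follows. (Your initial description of an A-pattern as using ``all $\ell+1$ vertices'' is also wrong---the maximum independent set of $S_\ell$ is the $\ell$ leaves---though you correct yourself a few lines later.)
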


\begin{proof}

	By \cref{lemma:Bipartite_Patterns}, there exists an optimal AB-schedule. 
	For a star  $S_\ell$, an A-pattern processes $\ell$ jobs in time 3 and a B-pattern processes $(\ell+1)$ jobs in time 4.
	An AB-schedule for at least $n$ jobs with exactly $k$ A-patterns  has a makespan of $4\lceil\frac{n-k\ell}{\ell+1}\rceil+3k$.
	Similarly, an AB-schedule for at least $n$ jobs with exactly $k$ B-patterns  has a makespan of $3\lceil\frac{n-k(\ell+1)}{\ell}\rceil+4k$.
 	 
 	  For $\ell\leq 2$, an optimal schedule contains  at most 2 A-patterns. While 3 A-patterns process $3\ell$ jobs in time 9, 2 B-pattern process $2(\ell+1)\geq 3\ell$ jobs in time 8. Thus, any 3 A-patterns can be replaced by 2 B-patterns.

	For $\ell\geq 3$, an optimal schedule contains at most 2 B-patterns: While  3 B-patterns process $3(\ell+1)$ jobs in time 12, 4 A-patterns finish $4\ell\geq 3(\ell+1)$ jobs in time 12. Thus, any 3 B-patterns can be replaced by 4 A patterns.
	
Therefore, for each star we only have to compare at most three different schedules with $k=0,1,2$ A- or B-patterns, respectively. Taking one with minimum makespan induces an optimal solution.
\end{proof}

We remark that for $S_{3}$, both 4 A-patterns and 3 B-patterns finish  twelve jobs in time~12. We exploit this fact later.

\subsubsection{The approach for general bipartite graphs}\label{sec:general_bipartite}
In the last section, we saw that we can restrict our attention to AB-schedules for stars. This property does not generalize to all bipartite graphs. In fact, it does not even hold for trees as illustrated in~\cref{fig:DanielsBeispiel}.
\begin{observation}
	There exists a tree $T$ such that no optimal schedule for \UnitP{}$(T,n)$ is an AB-schedule with respect to $T$.
\end{observation}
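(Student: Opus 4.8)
The plan is to exhibit a concrete small tree $T$ together with a job count $n$ for which one can verify by hand that every optimal schedule beats all AB-schedules on $T$. The natural candidate is a "double star" or "spider": take two stars $S_{\ell_1}$ and $S_{\ell_2}$ whose centers are joined by an edge (or, as Figure~\ref{fig:DanielsBeispiel} suggests, a slightly larger caterpillar). The intuition is that an AB-schedule on $T$ is forced to use a maximum $1$-IS or a maximum $2$-IS of the \emph{whole} tree simultaneously, so whenever one center is "active" the other center must be idle and vice versa; but a genuinely optimal schedule can interleave things more cleverly — running an A-pattern localized to one star while simultaneously running a B-pattern localized to the other star, since the two stars only interact through the single center-center edge. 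This desynchronization is exactly what the rigid A/B-pattern structure forbids.

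Concretely, I would first compute $\alpha_1(T)$ and $\alpha_2(T)$ for the chosen $T$ and write down, using \cref{cor:stars}-style bookkeeping, the best makespan achievable by an AB-schedule for the chosen $n$: a pure-A schedule finishes $\alpha_1(T)$ jobs every $3$ time units, a pure-B schedule finishes $\alpha_2(T)$ jobs every $4$ time units, and mixed schedules interpolate; taking the minimum over the number of A- vs.\ B-patterns (and using the replacement arguments from the proof of \cref{cor:stars} to bound how many of each are needed) gives a clean closed-form lower bound on $\|\Pi\|$ for any AB-schedule. Then I would construct an explicit non-AB schedule for the same $n$ that pieces together patterns supported on the two stars with a relative time offset, check conflict-freeness (only the single edge between the two centers needs to be inspected, and the offset is chosen so the two centers never have overlapping blocking times), and compute its makespan, which will come out strictly smaller. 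That contradiction establishes the observation.

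The main obstacle is purely a matter of choosing $T$ and $n$ so that the arithmetic genuinely separates: the gap between an optimal schedule and the best AB-schedule can easily be only an additive constant, so $n$ must be small enough that this constant is not swamped, yet the instance must be large enough that the AB-structure is actually constraining (for a single star it is \emph{not} constraining, by \cref{lemma:Bipartite_Patterns}). I expect the cleanest route is to let $T$ be the tree already drawn in \cref{fig:DanielsBeispiel} and simply pick the smallest $n$ for which the picture's schedule is optimal, verify its makespan directly, and verify that no combination of A- and B-patterns on $T$ matches it. Since this is stated only as an \emph{observation} (not a theorem) and is used merely to motivate abandoning AB-schedules for general trees, a single worked example with the two makespans written out side by side suffices; no general lower-bound machinery beyond \cref{corr:lower_bound} and \cref{cor:stars} is needed.
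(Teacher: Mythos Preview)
Your proposal is correct and matches the paper's approach exactly: the paper's entire ``proof'' of this observation is \cref{fig:DanielsBeispiel}, which depicts a specific $7$-vertex tree (two stars glued at their centers) with $n=22$ and an explicit schedule of makespan $12$, while a short case check shows the best AB-schedule on $T$ (three A-patterns plus one B-pattern, since $\alpha_1(T)=5$ and $\alpha_2(T)=7$) needs makespan $13$. Your plan to read off $T$ and $n$ from the figure, bound the AB-makespan by enumerating $(k,\ell)$ with $5k+7\ell\ge 22$, and exhibit the interleaved two-star schedule is precisely what the paper does pictorially.
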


\begin{figure}[htb]
	\centering
			\includegraphics[page=19]{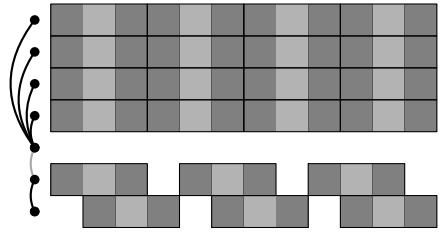}
\caption{Optimal schedule on a tree with $7$ machines and $22$ jobs with makespan $12$.}
\label{fig:DanielsBeispiel}
\end{figure}

Note, however, that the optimal schedule shown in \cref{fig:DanielsBeispiel} is comprised 
 of two $AB$-schedules on the two stars obtained by deleting the gray edge. 
 Inspired by this observation, we show that this property generalizes to all bipartite graphs. This is a crucial step for our polynomial time algorithm.

 \begin{theorem}\label{theorem:bipartite}
 	For every bipartite graph $G$ and every $n$, an optimal schedule for
 	\Unit can be computed in polynomial time.
 \end{theorem}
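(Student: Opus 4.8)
The plan is to reduce the problem on a general bipartite graph $G$ to a family of star instances, exploiting \cref{lemma:Bipartite_Patterns} and \cref{cor:stars}. First I would decompose $G$: since $G$ is bipartite with parts $U$ and $W$, I want to choose a \emph{spanning star forest} of $G$, i.e., a partition of $V(G)$ into stars each of which is a subgraph of $G$. The key claim I would aim to prove is that there exists an optimal schedule for \Unit that decomposes, over the components of some suitably chosen spanning star forest $F$, into an AB-schedule on each star component (as in the tree example of \cref{fig:DanielsBeispiel}, where deleting the gray edge yields two stars each carrying an AB-schedule). Given such a structural theorem, the algorithm is immediate: enumerate (or cleverly search over) spanning star forests and, for each, combine the per-star optimal makespan formulas from \cref{cor:stars}; since each star's optimum is determined by its leaf count $\ell$ and $n$ via a comparison of at most three candidates, and since $F$ has at most $m$ components, the bottleneck is choosing $F$ and distributing the $n$ jobs among its components.

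The first step is to establish the structural normal form. Starting from an arbitrary optimal schedule $\Pi$ (integral by \cref{lem:integral}), I would argue that one can, without increasing the makespan and possibly after adding jobs, reorganize $\Pi$ so that at every integral time $t$ the set of machines that are "busy" in the relevant sense forms a union of A- and B-patterns sitting on disjoint stars. The heart of this is \cref{prop:bip}: at any time the active machines induce a bipartite subgraph, and in a unit-job schedule the "blocking conflict" structure at consecutive integral times is very rigid — each active machine is either a center (a job on it overlaps jobs on several neighbors) or a leaf. I would make this precise by tracking, over a window of a few consecutive time units, which machine plays the center role for which job, and showing these roles can be made globally consistent, yielding a partition of the used machines into stars. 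This is essentially a scaled-up version of the final paragraph of the proof of \cref{lemma:Bipartite_Patterns}: every job on a non-center machine together with the (at most one) job overlapping it on its chosen center forms an A- or a B-pattern.

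The second step is the algorithmic one. Once we know an optimal schedule is an AB-schedule over \emph{some} spanning star forest $F$, we want to compute, over all choices of $F$ and all ways of allotting jobs to the components, the minimum makespan. For a fixed $F$ with component leaf-counts $\ell_1,\dots,\ell_r$ and job allotment $n_1+\dots+n_r=n$, the makespan is $\max_i g_{\ell_i}(n_i)$ where $g_\ell$ is the function from \cref{cor:stars}. To avoid enumerating all of exponentially many $F$'s, I would guess the target makespan $M$ (binary search over $O(n)$ relevant values, using that the running time may depend polynomially on $\log n$), and then ask: is there a spanning star forest $F$ and an allotment such that every component with $\ell$ leaves receives at most $h_\ell(M)$ jobs, where $h_\ell(M)$ is the max number of jobs a star $S_\ell$ can finish by time $M$ (read off from \cref{cor:stars})? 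This feasibility question — find a spanning star forest maximizing $\sum_{\text{components}} h_{\ell(\text{comp})}(M)$ and check it is $\ge n$ — is a matroid/flow-type optimization on the bipartite graph $G$ that I expect to solve in polynomial time (it reduces to a degree-constrained subgraph / assignment computation, since a spanning star forest of a bipartite graph is essentially a way of orienting a spanning subgraph so each component has one center).

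\textbf{Main obstacle.} The delicate part is the structural theorem: proving that \emph{global} optimality forces the AB-over-star-forest normal form, rather than merely that it does so locally at each time step. One must rule out schedules that cleverly interleave jobs so that the center role of a machine changes over time in a way incompatible with any fixed spanning star forest, and handle the job-count bookkeeping (adding jobs to fill out patterns must not increase the makespan — this needs the comparison inequalities "$3$ A-patterns $\le$ $2$ B-patterns in jobs within time $9$" and their cousins from \cref{cor:stars}, now applied across an entire forest simultaneously). I expect the clean way through is to first prove a "per-star is AB" statement conditional on a fixed center-assignment and then optimize over center-assignments; the $S_3$ tie noted after \cref{cor:stars} (4 A-patterns vs.\ 3 B-patterns, both finishing 12 jobs in time 12) is exactly the flexibility that makes gluing stars of different sizes into one schedule possible without makespan loss, and I would use it as the linchpin of the gluing argument.
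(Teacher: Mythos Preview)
Your plan has a genuine gap at the point where you ``combine the per-star optimal makespan formulas'': you implicitly assume that if $F$ is a spanning star forest of $G$ and you run an AB-schedule on each star of $F$, the result is a feasible schedule for $G$. It is not. Edges of $G$ that go \emph{between} different stars of $F$ still impose conflicts, and two leaves in different stars running simultaneous A-patterns (or a leaf in one star and a center in another star during a B-pattern) will violate those conflicts. Your feasibility test ``is there a spanning star forest $F$ with $\sum h_{\ell}(M)\ge n$'' therefore certifies nothing about $G$; it only certifies that the makespan $M$ is achievable on the \emph{subgraph} $F$, which is merely a lower bound on the optimum for $G$.

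This is precisely the issue the paper's proof is built around. The paper does not search over star forests at all. Instead it constructs a \emph{single} star forest $H$ (starting from a maximum matching and a maximum independent set, then modifying along alternating paths) together with what it calls I-, II-, and III-colorings: these are carefully chosen independent-set structures in $G$ that allow the A- and B-patterns on the various stars of $H$ to be time-aligned so that cross-star edges of $G$ are never violated. The subgraph inequality gives $\mathrm{OPT}(G)\ge\mathrm{OPT}_{AB}(H)$, and the colorings give the reverse inequality by exhibiting a $G$-feasible schedule of that makespan; no search is needed. The $S_3$ tie you spotted is indeed used, but as a device for synchronizing patterns across stars of different sizes within a single block of length $12$, not for optimizing over different forests.

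A secondary issue: even ignoring feasibility, your optimization ``maximize $\sum_{\text{components}} h_{\ell(\text{comp})}(M)$ over spanning star forests'' is not a standard degree-constrained subgraph problem, because $h_\ell(M)$ is neither linear nor concave in $\ell$; you would need a separate argument that this is solvable in polynomial time.
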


	We solve the problem on each connected component. If a component contains one vertex, we use \cref{lem:UNITcomplete}. Thus, it remains to show how to compute optimal schedules for connected bipartite graphs on at least two vertices.
	
	The key idea is to find a spanning subgraph $H$ of $G$ for which AB-schedules with respect to $H$ are among the optimal schedules for $G$. Given such a subgraph, we can then exploit the special structure of AB-schedules to compute an optimal schedule in polynomial time.
	In particular, we are interested in subgraphs that consists of stars and allow so called I-, II-, and III-colorings.

\subparagraph{Star forests and I,II,III-colorings.}
	Let $G$ be a (bipartite) graph. A subgraph $H$ of $G$ is a \emph{star forest of $G$} if each component of $H$ is a star and $H$ contains all vertices of $G$. Whenever we talk about a star of $H$, then we refer to  an entire component of $H$.

Moreover, we make use of particular vertex subsets, denoted by $A_i$ and $B_i$, to process the jobs. The idea  is to schedule A-patterns on the $A_i$'s and $B$-patterns on the $B_i$'s. The properties ensure that the resulting schedules are valid for $H$ and $G$.

	Let $H$ be a star forest of a bipartite graph $G$. A vertex subset $A_1$ is a \emph{I-coloring} of~$(G,H)$ if it is a maximum independent set of both $G$ and $H$. Clearly, this is equivalent to the following conditions: no two vertices of $A_1$ are adjacent in $G$ and for each star $S$ of $H$,  $A_1\cap S$ is a maximum independent set of $S$. 
	 A I-coloring allows to schedule an A-pattern on $H$ (by placing one job on each machines in $A_1$), yielding a valid schedule for $G$, see \cref{fig:ColoringExamples}(left).
	 
	 \begin{figure}[hb]
	 	\centering
	 	\includegraphics[page=10]{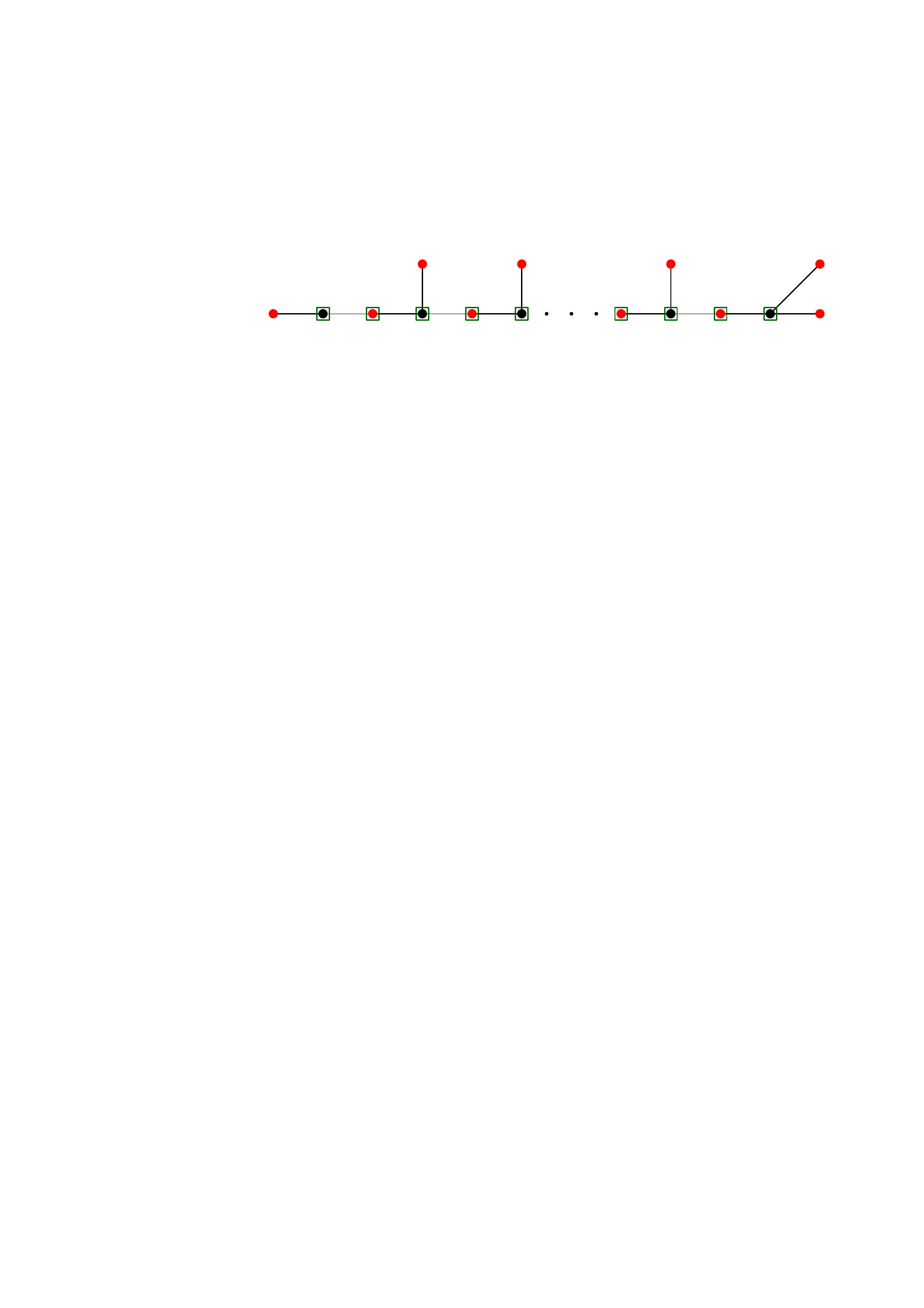}
	 	\caption{A graph and  a star forest with a I-, II-, and III-coloring and the corresponding schedules. Vertex sets $A_i$ are colored in asparagus and $B_i$ in blue.}
	 	\label{fig:ColoringExamples}
	 \end{figure}

Two disjoint vertex subsets $A_2,B_2$ are a \emph{II-coloring} of $(G,H)$, if no vertex of $A_2$ is adjacent to another vertex from $A_2\cup B_2$ in $G$ and the following properties hold: (i) for each $S=S_{\ell}$, $\ell\geq 3$, $A_2\cap S$ is a maximum independent set of $S$,  (ii) $B_2$ contains the vertices of  each $S_{1}$, and (iii) for each $S_{2}$, either $A_2\cap S_2$ is a maximum independent set of $S_{2}$ or $B_2$ contains the vertices of $S_{2}$.
Observe that a II-coloring allows us to schedule 3 A-patterns on stars with leaves in $A_2$ and 2 B-patterns on stars with vertices  in $B_2$,  see \cref{fig:ColoringExamples}(middle).

 Two disjoint vertex subsets $A_3,B_3$ are a \emph{III-coloring} of $(G,H)$, if no vertex of $A_3$ is adjacent  to another vertex from $A_3\cup B_3$ in $G$ and the following properties hold: (i) for each $S=S_{\ell}$, $\ell\geq 4$, $A_3\cap S$ is a maximum independent set of $S$,  (ii) $B_3$ contains the vertices of  each $S_{1}$ and each $S_{2}$, and (iii) for each $S_{3}$,  $A_3\cap S_{3}$ is a maximum independent set of $S_{3}$ or $B_3$ contains the vertices of $S_{3}$.
A III-coloring allows us to schedule 4 A-patterns on stars with leaves in $A_{3}$ and 3 B-patterns on stars with vertices in $B_{3}$,  see \cref{fig:ColoringExamples}(right).
\medskip

The proof of \cref{theorem:bipartite} consists of two parts: Firstly, we show how to compute optimal schedules when we are given a star forest of $G$ with a I-, II-, and III-coloring (\cref{lem:givenForest}). Secondly, we present an algorithm to compute  a star forest  and the colorings (\cref{lem:algo_correctpoly}).

\subsubsection{Optimal schedules for a given star forest with I,II,III-colorings}
The key idea is to find a star forest of the graph such that the optimal AB-schedule on each star yields a feasible schedule with respect to $G$, hence is also optimal.

	\begin{lemma}\label{lem:givenForest}
		Let $H$ be a star forest of a connected bipartite graph $G$ on at least two vertices.  Given a I-coloring $A_1$, a  II-coloring $(A_2,B_2)$ and a  III-coloring $(A_3,B_3)$, there exists a polynomial time algorithm to compute an optimal schedule for \Unit. 
	\end{lemma}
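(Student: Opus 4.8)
\textbf{Proof plan for \cref{lem:givenForest}.}
The plan is to reduce the problem to a small, structured optimization over how many stars of $H$ are scheduled with an A-pattern versus a B-pattern, and then solve that optimization explicitly. First I would observe that the three colorings give us a toolbox of \emph{global} moves that are simultaneously valid for $H$ and for $G$: a I-coloring lets us place one synchronized A-pattern across all stars (time $3$), a II-coloring lets us place three A-patterns on the ``A-stars'' and two B-patterns on the ``B-stars'' in a common block of length $9=2\cdot 4+1$ (the $+1$ coming from the usual interleaving slack), and a III-coloring similarly packs four A-patterns and three B-patterns into a block of length $12$. The key point, to be verified by a direct interval-overlap check using the adjacency conditions in the definitions of the colorings (no vertex of $A_i$ is adjacent in $G$ to anything in $A_i\cup B_i$, and the $B_i$ pieces are whole small stars), is that stacking such blocks produces a conflict-free schedule for $G$ whose makespan equals the makespan of the corresponding AB-schedule computed star-by-star. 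So any schedule obtainable by concatenating these global blocks is feasible for $G$.

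Next I would argue the matching lower bound, i.e.\ that no schedule for $G$ can beat the best such block schedule. Here I invoke \cref{lemma:Bipartite_Patterns} and \cref{cor:stars}: on each individual star $S$ of $H$, an optimal schedule uses only A- and B-patterns, and in fact at most two of one designated type; moreover, since $H$ is a spanning subgraph of $G$, every feasible schedule for $G$ is feasible for $H$, hence restricts to a feasible (so AB-reducible) schedule on each star. The subtlety is that a schedule for $G$ need not be an AB-schedule with respect to $H$ globally (the tree example shows this), so I cannot simply decompose it per star. Instead I would use \cref{prop:bip} together with the lower-bound machinery of \cref{corr:lower_bound} / \cref{lem:startingJobs}: at every time $t$ the active machines form a bipartite subgraph, and counting how many jobs can start in any window of length $q=3$ is governed by $\beta_{2}^{\Pi}$, which is at most the size of a maximum $2$-IS of $G$; combined with the star-forest structure this yields a per-component throughput bound (at most $\ell$ jobs per $3$ time units or $\ell+1$ per $4$ time units on a star $S_\ell$, summed appropriately) that exactly matches what the colorings achieve. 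Thus the optimal makespan is the minimum, over the allowed block-composition schemes, of the explicit formulas from \cref{cor:stars} aggregated over the stars.

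Then the algorithm is essentially bookkeeping. For a target number of jobs $n$, an optimal schedule is a concatenation of at most a constant number of ``special'' blocks (using the II- and III-colorings, which handle the awkward small stars $S_1, S_2, S_3$) followed by a long run of plain A- and B-patterns governed by the I-coloring on the remaining stars; by \cref{cor:stars} each star individually needs at most two patterns of one type, so only $O(1)$ distinct block-configurations are relevant, and for each we can compute the optimal number of repetitions in time polynomial in $\log n$ and $|G|$ by solving the one-dimensional optimization $\min_k\{\,\cdots\,\}$ exactly as in \cref{cor:stars}. Taking the best configuration gives an optimal schedule; constructing it explicitly just means writing down the starting times, which is linear in the output size. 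Overall the running time is polynomial in $|G|$ and $\log n$.

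\textbf{Main obstacle.} The hard part is the lower bound: proving that an arbitrary conflict-free schedule for $G$ — which may freely mix jobs across stars and need not look like any AB-schedule on $H$ — cannot do better than the best coloring-based block schedule. The argument has to leverage that $H$ is spanning (so feasibility for $G$ implies feasibility for $H$) and that each star's throughput is rigidly bounded, while ruling out that cross-star coordination buys anything; I expect this to require carefully combining the per-star bounds of \cref{cor:stars} with the interval-counting bound of \cref{corr:lower_bound} applied to $G$, and checking that the II-/III-colorings are exactly what is needed to realize the aggregated bound for the small-star cases where a single repeated A- or B-pattern would be suboptimal.
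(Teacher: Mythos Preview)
Your proposal has the difficulty inverted. The lower bound you flag as the ``main obstacle'' is in fact the trivial direction, and you already state the one-line argument: since $H$ is a spanning subgraph of $G$, every $G$-feasible schedule is $H$-feasible, so $\text{OPT}(G)\geq\text{OPT}(H)$, and by \cref{lemma:Bipartite_Patterns} the right-hand side equals the best AB-schedule value on $H$. No appeal to $\beta_2^\Pi$, \cref{corr:lower_bound}, or \cref{prop:bip} is needed, and those tools would not give a sharper bound anyway.

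The real work is the \emph{upper bound}: exhibiting a schedule that is feasible for $G$ and achieves makespan $\text{OPT}(H)$. This is not a ``direct interval-overlap check.'' An arbitrary optimal AB-schedule on $H$ need not be $G$-feasible, because two stars that are disconnected in $H$ may be adjacent in $G$; you must show that for \emph{every} value of $\text{OPT}(H)$ there is \emph{some} optimal AB-schedule on $H$ that can be synchronized via the colorings. The paper does this by an explicit case analysis for all makespans $\|\Pi\|\leq 20$: for each value it prescribes, star-type by star-type, a specific AB-pattern sequence (sometimes modifying the naive per-star optimum, e.g.\ replacing $2B$ by $2A$ on $S_1$ when $\|\Pi\|\equiv 2\pmod 4$) so that the whole schedule decomposes into pieces realizable on $A_1$, on all of $V$, on $(A_2,B_2)$, or on $(A_3,B_3)$. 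For $\|\Pi\|\geq 21$ one argues that every star carries enough load to peel off a full III-block of length $12$ ($4$A on $A_3$, $3$B on $B_3$) and recurse. Your sketch underestimates this step (and gets some details wrong: the length-$9$ II-block is simply $3\cdot 3$ for the A-side running in parallel with $2\cdot 4$ on the B-side, with no ``interleaving slack''; and the long repeated part uses the III-coloring, not the I-coloring).
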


	\begin{proof}
		Let $\Pi'$ be an optimal schedule for \Unit. By \cref{lemma:Bipartite_Patterns}, there exists an optimal AB-schedule~$\Pi$ for \UnitP{$(H,n)$}.
		Because $H$ is a subgraph of $G$, we have $\|\Pi\|\leq \|\Pi'\|$.
		First, we show how to solve \Unit for small optimal makespan values.
		
		\begin{claim}\label{claim:shortAB}
			If $\|\Pi\|\leq 20$, then there exists an optimal AB-schedule $\Pi^*$ on $H$ that is feasible for $G$ (according to \cref{table:patterns}).
		\end{claim}

		Note that  $\|\Pi\|\geq q =3$ and that there is no  AB-schedule with $\|\Pi\|=5$.
		We first show how to define a  schedule $\Pi^*$ of makespan $\|\Pi\|$ according to \cref{table:patterns} that has as many jobs as $\Pi$. Afterwards, we show that $\Pi^*$ is feasible with respect to $G$.
		To this end, we first concentrate on the four types of components in $H$. 
		For $C\in\{S_1,S_2,S_3,S_{\ell\geq 4}\}$, let $L_{C}$ denote the set of all makespan (of at most 20) from schedules obtained by \cref{cor:stars}.
		For $C\in\{S_1,S_2,S_3,S_{\ell\geq 4}\}$ and $\|\Pi\|\in [20]\setminus \{1,2,5\}$, we use the optimal AB-schedule with makespan $\|\Pi\|$ (or the maximum value in $L_C$ that is $\leq\|\Pi\|$). 
		However, we modify some schedules in order to guarantee feasibility later: modified entries are marked by an asterisk in~\cref{table:patterns}. 
		
		For $S_1$ and for $\|\Pi\| \equiv 2 \pmod{4}$, we use 2 A-patterns and $(\lfloor\nicefrac{\|\Pi\|}{4}\rfloor -1)$ B-patterns instead of $\lfloor\nicefrac{\|\Pi\|}{4}\rfloor$ B-patterns. Since 2 A-patterns and 1 B-pattern differ in length 2, the modified schedule finishes within $\|\Pi\|$. It also schedules at least as many jobs as $\Pi$, because 1 A-pattern contains one job while 1 B-pattern contains two jobs.
		
		For $S_2$ and for $\|\Pi\| \equiv 1 \pmod{4}$, we also allow to schedule 3 A-patterns and $\lfloor\nicefrac{\|\Pi\|}{4}\rfloor-2$ B-patterns besides the optimal schedule using $\lfloor\nicefrac{\|\Pi\|}{4}\rfloor$ B-patterns.  As 3 A-patterns and 2 B-patterns differ in length 1, the modified schedule finishes within $\|\Pi\|$. Moreover, both 3 A-patterns and 2 B-patterns contain six jobs.
		\cref{table:patterns} displays the resulting patterns on the components. The schedule $\Pi^*$ is constructed as follows: Each A-pattern is scheduled on $A_1$,  each B-pattern on $V$, 3A-patterns on $A_2$, 2B-patterns on $B_2$, 4A-patterns on $A_3$, and 3B-patterns on $B_3$.
		\begin{table}[htb]
			\centering
			\caption{AB-schedules on the stars of $H$. The number before A and B indicates the number of A- and B-patterns. Consecutive patterns are separated by commas.}
			\label{table:patterns}
			\begin{tabular}{  c | c | c | c | c  }
				$\|\Pi\|$ & $S_{1}$ & $S_{2}$ & $S_{3}$ & $S_{\ell}$, $\ell\geq 4$   \\ \hline
				\rowcolor[gray]{0.9}1 & - & - & - & -  \\
				2 & - & - & - & -  \\ 
				\rowcolor[gray]{0.9}3 & A & A & A & A \\
				4 & B & B & B & B \\
				\rowcolor[gray]{0.9}5 & - & - & - & - \\
				6 & 2A* & 2A & 2A & 2A \\
				\rowcolor[gray]{0.9}7 & A,B & A,B & A,B & A,B \\
				8 & 2B & 2B & 2B & 2B \\
				\rowcolor[gray]{0.9}9 & 2B & 3A or 2B* & 3A & 3A \\
				10 & 2A,B* & 2A,B & 2A,B & 2A,B \\
				\rowcolor[gray]{0.9}11 & A,2B & A,2B & A,2B & A,2B \\
				12 & 3B & 3B & 4A or 3B & 4A \\
				\rowcolor[gray]{0.9}13 & B,2B & B,(3A or 2B)* & B,3A & B,3A \\
				14 & 2A,2B* & 2A,2B & 2A,2B & 2A,2B \\
				\rowcolor[gray]{0.9}15 & A,3B & A,3B & A,(4A or 3B) & A,4A \\
				16 & B,3B & B,3B & B,(4A or 3B) & B,4A \\
				\rowcolor[gray]{0.9}17 & 2B,2B & 2B,(3A or 2B)* & 2B,3A& 2B,3A \\
				18 & 2A,3B* & 2A,3B & 2A,(4A or 3B) & 2A,4A \\
				\rowcolor[gray]{0.9}19 & A,B,3B & A,B,3B & A,B,(4A or 3B) & A,B,4A \\
				20 & 2B,3B & 2B,3B & 2B,(4A or 3B) & 2B,4A  
			\end{tabular}	
		\end{table}
		It remains to show that scheduling according to \cref{table:patterns} is feasible with respect to~$G$.
		
		By definition of $A_1$, scheduling an A-pattern on $A_1$ yields a feasible schedule for \Unit. This is used for the A-patterns in the cases where $\|\Pi\|\in \{3,6,10,11,14,15,18,19\}$.
		
		Because $G$ is bipartite, a B-pattern can be scheduled on $V$. This is used for the B-patterns in the cases where $\|\Pi\|\in\{4,7,8,10,11,13,14,16,17,19,20\}$.
		
		Because $(A_2,B_2)$ is a II-coloring, scheduling 3 A-patterns on $A_2$ and 2 B-patterns on $B_2$ is feasible. This is used for the case where $\|\Pi\|\in\{9,13,17\}$.
		
		Because $(A_3,B_3)$ is a III-coloring, scheduling 4 A-patterns on $A_3$ and 3 B-patterns on~$B_3$ is feasible. This is used for the case where $\|\Pi\|\in\{12,15,16,18,19,20\}$. This proves \cref{claim:shortAB}.
		
		\begin{claim}\label{claim:longAB}
			There exists an optimal schedule that is comprised of blocks of length 12 following row 12 of \cref{table:patterns} and one rest block of length at most 20 following \cref{table:patterns}. 
		\end{claim}

		By \cref{claim:shortAB}, we need to show Claim \cref{claim:longAB} for $\|\Pi\|\geq 21$.
		We may assume that for any two different components of $H$, their makespans in $\Pi$ differ by at most 3. Suppose there exists components $C_1$ and $C_2$ such that the makespan of $C_1$ exceeds the makespan of $C_2$ by at least 4. Deleting a last job from $C_1$ and inserting it after the last job of $C_2$, the makespan difference decreases.
		As a consequence, the makespan in $\Pi$ on each component is at least~18. This implies that  $\Pi$ schedules at least 4 A-patterns or 3 B-patterns both of length 12 on each connected component: Let $C$ be a component of $H$. If $\Pi$ has at most 3 A-patterns and at most 2 B-patterns on $C$, then the makespan on $C$ is at most $3\cdot 3+2\cdot 4=17$. A contradiction. Therefore, if $\|\Pi\|\geq 21$, we modify $\Pi$ by scheduling the 4 A-patterns on $A_3$ and the 3 B-patterns on $B_3$. By definition of a III-coloring of $(A_3,B_3)$ this yields a feasible subschedule with respect to $G$. We repeat this procedure until the makespan  of the remaining patterns is at most 20. This proves \cref{claim:longAB}. \medskip
			
		We can compute the schedule obtained by \cref{table:patterns} for each possible value $r\in [20]\setminus \{1,2,5\}$ on each connected component using the given I-,II- and III-colorings and filling it up with blocks of 12 following row 12 of \cref{table:patterns}. By \cref{claim:longAB}, the schedule with minimum makespan is an optimal schedule.
\end{proof}

This allows us to compute an optimal schedule for \Unit in polynomial time when a star forest together with its colorings are given.


\subsubsection{Computing a star forest with the colorings}
In this section, we present a polynomial time algorithm to find a star forest of a connected bipartite graph that admits I,II,III-colorings as used in \cref{lem:givenForest}.

We compute such a star forest in three steps.
First, we find a star forest admitting a I-coloring.
To ensure the existence of a II- and III-coloring, some adjustments may be necessary. Specifically, we will modify $H$ by switching edges along alternating paths.

\subparagraph{Alternating paths.} Let $H=(V,E')$ be a star forest of a bipartite graph $G=(V,E)$. 
Let $C_1,\ldots,C_k$ be distinct stars of $H$ and $P$ be  a path in $G$ on the vertices $v_1,v_2\dots, v_{2k-2}$ with the following properties:
\begin{itemize}
	\item for even $i$, $v_i$ is a leaf of star $C_{\nicefrac{i}{2}+1}$
	\item for odd $i$, $v_i$ is the center of star $C_{ \nicefrac{(i+1)}{2}}$	
	\item  edge $\{v_i,v_{i+1}\} \in E'$ if and only if $i$ is odd.
\end{itemize}
The path $P$ is an \emph{alternating path of type II} if $C_1\simeq S_{1}$ , $C_i\simeq S_{2}$ for all $i=2,\dots,k-1$ and $C_k\simeq S_{\ell}$ with $\ell\geq 3$. For an illustration of alternating paths of type {II}, see \cref{fig:alter_path_type2}.

\begin{figure}[htb]
	\centering
	\includegraphics[page=5]{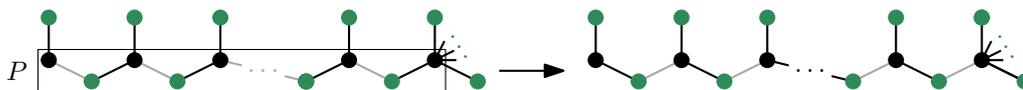}
	\caption{Alternating path $P$ of type II.  Black edges belong to $H$, gray edges to $G\setminus H$.
	}
	\label{fig:alter_path_type2}
\end{figure}

We say $P$ is an \emph{alternating path of type III} if $C_1\simeq S_{2}$ , $C_i\simeq S_{3}$ for all $i=2,\dots,k-1$ and $C_k\simeq S_{\ell}$ with $\ell\geq 4$. For an illustration of an alternating path of type {III}, see \cref{fig:alter_path_type3}.

\begin{figure}[!htb]
	\centering
	\includegraphics[page=6]{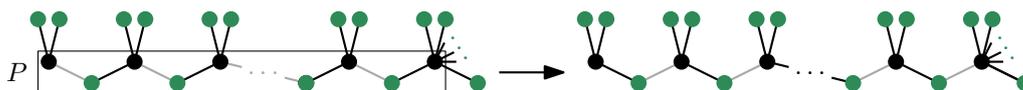}
	\caption{Alternating path of type III. Black edges belong to $H$, gray edges to $G\setminus H$.}
	\label{fig:alter_path_type3}
\end{figure}

If a star forest contains either type of alternating path, we show how to modify it such that the resulting star forest does not include any alternating path of type II or III. Here, we make use of the following observation.
\begin{observation}\label{observation:augmenting}
Let $H=(V,E')$ be a star forest of $G$ containing an alternating path $P$ of type II or III. Then,  $H':=(V,E'\Delta P)$ is a star forest with the same set of leaves as $H$, where $\Delta$ denotes the symmetric difference.
\end{observation}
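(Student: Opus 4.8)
The statement asks for two things about $H'=(V,E'\Delta P)$: that it is again a star forest of $G$, and that a vertex is a leaf of $H'$ exactly when it is a leaf of $H$. The first and easy observation is that $H'$ is a spanning subgraph of $G$, since it has vertex set $V$ and $E'\Delta P\subseteq E'\cup E(P)\subseteq E$ (as $E'\subseteq E$ and $P$ is a path in $G$); so the whole content lies in the component structure. The plan is to localise the change first: by definition of an alternating path, every edge deleted or inserted when passing from $H$ to $H'$ is an edge of $P$, all vertices of $P$ lie in $V(C_1)\cup\cdots\cup V(C_k)$, and the inserted \emph{bridge} edges only join vertices within this union. Hence $H$ and $H'$ coincide on every edge incident to a vertex outside $V(C_1)\cup\cdots\cup V(C_k)$; in particular each component of $H$ other than $C_1,\dots,C_k$ is untouched and stays a star with the same leaves. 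It therefore suffices to describe $H'$ on $V(C_1)\cup\cdots\cup V(C_k)$, i.e.\ to understand how taking the symmetric difference with $P$ reconfigures the chain $C_1,\dots,C_k$.

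For this I would walk along $P$, using the key fact that its edges alternate between $E'$ and $E\setminus E'$: passing to $H'$ deletes the within-star $E'$-edges of $P$ and inserts its bridges. For an interior star $C_i$ of the chain, the centre keeps all of its private (off-path) leaves, and of its two $P$-neighbours the deleted within-star edge is exactly compensated by an inserted bridge, so the centre again spans a star in $H'$; a leaf of $C_i$ that lies on $P$ loses precisely one $E'$-edge and gains precisely one bridge, so it remains a degree-one vertex. The two ends of the chain need a short separate check, and this is where the size hypotheses enter: for a type-II path the end $C_1\simeq S_{1}$ gains a leaf and becomes an $S_{2}$, while the end $C_k\simeq S_\ell$ loses exactly one leaf and, because $\ell\ge 3$, is still a star $S_{\ell-1}$ with at least two leaves, so its centre remains unambiguously a centre; the type-III case is identical, the chain ends being $S_{2}$ (which becomes $S_{3}$) and $S_\ell$ with $\ell\ge 4$. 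This shows every component of $H'$ is a star, hence $H'$ is a star forest of $G$. The very same degree bookkeeping also settles the leaves: interior $P$-vertices and all off-path leaves keep their degree, hence their role; every on-path leaf keeps degree one and so stays a leaf; and the two chain-end centres keep degree at least two (using $\ell\ge3$, resp.\ $\ell\ge4$, and taking the endpoint of $P$ inside an $S_{1}$-component to be that component's centre). So no vertex switches between leaf and centre, and $H$ and $H'$ have the same set of leaves.

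I expect the only genuine work to be this bookkeeping — tracking which edges of $P$ lie in $E'$ along the chain and recognising each reconfigured piece as a star — together with the two boundary cases, which are exactly where $\ell\ge 3$ (type~II), resp.\ $\ell\ge 4$ (type~III), is used so that the large end-star does not degenerate below $S_{2}$. Drawing the picture, as in \cref{fig:alter_path_type2,fig:alter_path_type3}, makes each of these verifications immediate.
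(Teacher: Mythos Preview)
The paper does not prove this statement; it is recorded as an observation and justified only implicitly by the accompanying figures. Your argument is correct and supplies precisely the verification the paper omits: the localisation to $V(C_1)\cup\cdots\cup V(C_k)$, the alternating-edge bookkeeping along $P$, and the separate treatment of the two chain ends are exactly what is needed, and you correctly isolate where the size hypotheses $\ell\ge 3$ (type~II) and $\ell\ge 4$ (type~III) enter so that the large end-star retains a well-defined centre. One small point worth stating explicitly: for the $S_1$-end in the type-II case the ``same set of leaves'' conclusion depends on the paper's convention that one vertex of each $S_1$ is designated its centre; you handle this by taking the path endpoint $v_1$ to be that centre, which is consistent with how the observation is later used in \cref{lem:algo_correctpoly} (there the centres are exactly the vertices of the minimum vertex cover~$U$, and the leaves are $A_1=I$).
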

Furthermore, we use the following lemma.
\begin{lemma}\label{lem:augmenting3to2}
Let $H=(V,E')$ be a star forest of $G$ without any alternating paths of type~II. Let $P$ be an alternating path $P$ of type III. Then the star forest $H'=(V,E'\Delta P)$ contains no alternating paths of type II.
\end{lemma}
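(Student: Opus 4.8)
The plan is to argue by contradiction: suppose $H' = (V, E' \Delta P)$ contains an alternating path $Q$ of type~II, say on stars $D_1 \simeq S_1, D_2, \ldots, D_{j-1} \simeq S_2, D_j \simeq S_{\ell'}$ with $\ell' \geq 3$, and derive that $H$ must already have contained an alternating path of type~II, contradicting the hypothesis. The first step is to record, via \cref{observation:augmenting}, that switching along $P$ preserves the set of leaves and keeps $H'$ a star forest; in particular the \emph{sizes} of the stars can only change along the stars $C_1, \ldots, C_k$ touched by $P$. Concretely, switching along a type-III alternating path turns $C_1 \simeq S_2$ into an $S_1$, turns each interior $C_i \simeq S_3$ into an $S_2$, and turns $C_k \simeq S_\ell$ ($\ell \geq 4$) into an $S_{\ell-1}$; all other stars of $H$ are untouched. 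So the only ``new'' small stars created in $H'$ are one extra $S_1$, some extra $S_2$'s (from interior $C_i$'s), and possibly a new $S_3$ if $\ell = 4$.

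The main step is a case analysis on how the type-II path $Q$ in $H'$ interacts with the stars modified by $P$. If $Q$ uses none of the stars $C_1,\ldots,C_k$, then all stars of $Q$ have the same isomorphism type and the same center/leaf structure in $H$ as in $H'$, and the edges of $Q$ alternate between $E'$ and $E\setminus E'$ in exactly the same way (since $Q$ avoids the switched edges $P$), so $Q$ is already an alternating path of type~II in $H$ — contradiction. If $Q$ does meet some $C_i$, one tracks which role that $C_i$ plays in $Q$: the terminal star $D_j \simeq S_{\ell'\geq 3}$ of $Q$ could be $C_k$ itself (if $\ell - 1 \geq 3$, i.e. $\ell \geq 4$, which holds), an interior star $D_i \simeq S_2$ of $Q$ could be one of the $C_i$'s that became $S_2$, and the initial star $D_1 \simeq S_1$ of $Q$ could be the new $S_1$ that used to be $C_1 \simeq S_2$. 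The idea is that in each of these situations one can splice a sub-path of $P$ into $Q$ (or truncate/extend $Q$ along $P$) to build an alternating path of type~II \emph{in $H$}: e.g. if $Q$ terminates at $C_k$, replace the tail of $Q$ at $C_k$ by following $P$ backwards to $C_1 \simeq S_2$, and note $C_1$ has degree $2$ in $H$ so it can serve... here one must be careful, since type-II paths must start at an $S_1$, not an $S_2$ — so the real claim is that the concatenation still yields a valid type-II path, possibly after observing that the relevant endpoint star in $H$ is an $S_1$ because the only way a star lost a leaf going from $H$ to $H'$... actually stars only gain/lose leaves at $C_1$ and $C_k$, and $C_1 \simeq S_2 \mapsto S_1$ is the unique size decrease at the bottom. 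The key combinatorial observation to nail down is: a star is an $S_1$ in $H'$ but not in $H$ \emph{only if} it is $C_1$; a star is an $S_2$ in $H'$ but not in $H$ \emph{only if} it is some interior $C_i$ or (when $\ell=4$) $C_k$; and along $P$ these form a single path. Using this, whenever $Q$ enters the ``new'' part, it can be rerouted along $P$ into the ``old'' part, producing a type-II alternating path already present in $H$.

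The hard part will be the bookkeeping in the case analysis — in particular making sure that when $Q$ shares a vertex or an edge with $P$, the spliced path is genuinely a \emph{path} (no repeated vertices) and that the alternation pattern of $E'$- versus $(E\setminus E')$-edges is respected at the splice point, including the parity of centers and leaves along the combined path. One should also double-check the degenerate cases $k=1$ (where $P$ is a single vertex and ``type III'' degenerates — here $H' = H$ and there is nothing to prove) and $j = 1$ for $Q$. I would organize the write-up as: (1) invoke \cref{observation:augmenting}; (2) list exactly which stars change type and how, emphasizing they lie on the single path $P$; (3) assume $Q$ is a type-II path in $H'$ and split into the case $Q \cap P = \emptyset$ (immediate) versus $Q \cap P \neq \emptyset$; (4) in the latter case, take the first vertex of $Q$ (from its $S_1$-end) lying on $P$, and show $Q$ can be truncated there and extended along $P$ toward $C_1$ to yield a type-II alternating path in $H$, contradicting the hypothesis.
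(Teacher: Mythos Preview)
Your step (2) misreads the effect of the symmetric difference, and this error invalidates the entire plan. Switching along a type-III alternating path $P$ shifts one leaf from $C_{i+1}$ to $C_i$ for every $i=1,\ldots,k-1$: the net effect is that $C_1$ \emph{grows} from $S_2$ to $S_3$, each interior $C_i$ \emph{remains} $S_3$ (it loses one leaf and gains another), and only $C_k$ shrinks from $S_\ell$ to $S_{\ell-1}$, which is still $S_{\geq 3}$. Your version, in which every $C_i$ shrinks by one, would lose $k$ leaves in total, contradicting the very \cref{observation:augmenting} you invoke (it says $H$ and $H'$ have the \emph{same set of leaves}); it would also contradict the remark in the runtime analysis of the algorithm that ``no new $S_2$ is created in Phase~3''.

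With the correct picture the proof is much shorter, and it is essentially the paper's argument. Since no star of $H'$ is an $S_1$ or an $S_2$ unless it was already one in $H$ and untouched by $P$, any hypothetical type-II path $P_2$ in $H'$ has all of its initial $S_1$ and intermediate $S_2$'s among the unchanged stars of $H$; the only possible interaction with $P$ is at the terminal star $D_j\simeq S_{\ell'\geq 3}$ of $P_2$. If $D_j=C'_i$ with $i\geq 2$, then $C_i$ was already $S_{\geq 3}$ in $H$, and $P_2$ (with the same vertices, possibly reinterpreted as ending in $C_i$ or $C_{i+1}$) is a type-II path already in $H$. If $D_j=C'_1$, then in $H$ the terminal star is $C_1\simeq S_2$, so $P_2$ is not directly type-II in $H$; but extending it via $C_1$ along the first step of $P$ to $C_2\simeq S_3$ produces a type-II path in $H$. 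In either case one contradicts the hypothesis. The elaborate splicing and rerouting you propose in step~(4) is unnecessary, as is the worry about ``new $S_1$'s and $S_2$'s'' --- there simply aren't any.
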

\begin{proof}
Let $C_{1}, \dots C_{k}$ be the stars of the alternating path $P$ in $H$ and let $C'_{1}, \dots C'_{k}$ denote the corresponding stars in $H'$. Observe that $C'_{i}\simeq C_i\simeq S_3$ for all $i\in\{2,\dots,k-1\}$. Moreover, $C'_1=S_3$ and $C'_k=S_\ell$ for some $\ell \geq 3$.
For the purpose of a contradiction, suppose that $H'$ contains 
an alternating path $P_2$ of type II. 
Clearly,  $P_2$ and $P$ intersect; otherwise $P_2$ is also contained in $H$. Specifically, $P_2$ ends in some~$C'_i$ with $i\in\{1,\dots,k\}$. If $i>1$, then $P_2$ and $P$ share exactly the center of $C'_i$ and thus $H$ contains $P_{2}$ as well.  A contradiction. 
If $P_2$ ends in $C'_1$, then $H$ contains an alternating path of type II ending in $C_2$ that goes via  $C_1\simeq S_2$. Again, a contradiction.
\end{proof}

\begin{algorithm}[htbp]
  \caption{Computing a star forest and  I,II,III-colorings..}
  \label{alg:decompose}
  \begin{algorithmic}[1]
  \State Input: Connected bipartite graph $G=(V,E)$ with $|V|\geq 2$.
  \State Output: Star forest $H$ and I,II,III-colorings $A_1$,  $(A_2,B_2)$,  $(A_3,B_3)$.
    \phase{Initial star forest}
      \State Compute a maximum matching $M$ and a maximum independent set $I$ of $G$
      \State Set $U:=V\setminus I$ (vertex cover).
      \State Set $E': = M$ and $V' := V\setminus\bigcup_{e \in M} e$.
      \While{$\exists$ $v \in V'$}
      \State Find $u\in U$ such that  $\{u,v\} \in E$. 
      \State 
       Add $\{u,v\}$ to $E'$ and delete $v$ from $V'$.
      \EndWhile
    \phase{Removing alternating paths of type II}
    \While{$\exists$ alternating path $P$ of type II}
    \State $E' = E' \Delta P$
    \EndWhile
    \phase{Removing alternating paths of type III}
    \While{$\exists$ alternating path $P$ of type III}
    \State $E' = E' \Delta P$
    \EndWhile
    \phase{Computing the colorings}
    \State $H:=(V,E')$
    \State $A_1:=I$
    \State For each $S_\ell$ in $H$ with $\ell\geq 3$, add leaves of $S_\ell$ to $A_2$. 
    \State For each $S_1$ in $H$, add vertices of $S_1$ to $B_2$.
    \While{$\exists$  $S_2$ in $H$ such that its center  is adjacent to a vertex of $A_2$ in $G$}
   	  \State Add leaves of $S_2$ to $A_2$. 
    \EndWhile
    \State For each $S_2$ in $H$ with $V(S_2)\cap A_2=\emptyset$, add  vertices of  $S_2$ to $B_2$.
    \State For each $S_\ell$ in $H$ with $\ell\geq 4$, add leaves of $S_\ell$ to $A_3$.
    \State For each $S_\ell$  in $H$ with $\ell\in\{1,2\}$, add vertices of $S_\ell$ to $B_3$.
    \While{$\exists$ some $S_3$ in $H$ such that center $v$ of $S_3$ is adjacent to some $w\in A_3$ in $G$}
    \State Add leaves of $S_3$ to $A_3$.
    \EndWhile
    \State For each $S_3$ in $H$  with $V(S_3)\cap A_3=\emptyset$, add vertices of all $S_3$ to $B_3$.
    \State \Return  $ H,A_1,(A_2,B_2),(A_3,B_3)$ 
  \end{algorithmic}
\end{algorithm}

 Algorithm~\ref{alg:decompose} computes a star forest and I,II,III-colorings in polynomial time.

\begin{lemma}\label{lem:algo_correctpoly}
Algorithm~\ref{alg:decompose} returns a star forest $H$ of $G$ and  I-,II- and III- colorings $A_1,(A_2,B_2)$ and $(A_3,B_3)$ of $(G,H)$, respectively,  in time polynomial in~$G$.
\end{lemma}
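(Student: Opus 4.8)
The plan is to verify the three items separately: (a) that the output $H$ is a star forest of $G$, (b) that $A_1$, $(A_2,B_2)$, $(A_3,B_3)$ are valid I-, II-, III-colorings, and (c) that the running time is polynomial. First I would argue (a). After Phase~1, every unmatched vertex $v \in V'$ is attached to some vertex $u$ of the vertex cover $U = V \setminus I$; such a $u$ exists since $G$ is connected (so $v$ has a neighbour, and every neighbour of $v \notin I$ lies in the cover because $I$ is independent). The resulting graph has every component a star centred at a cover vertex: matched cover vertices may collect extra pendant leaves, matched independent vertices stay as $S_1$'s or get absorbed, and no two cover vertices are joined (they would need an edge, but an edge of $E'$ always has exactly one endpoint — its centre — in $U$ after this construction; I should check matching edges and added edges both have this property). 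Phases~2 and~3 only replace $E'$ by $E' \Delta P$ along alternating paths; by \cref{observation:augmenting} this preserves the property of being a star forest (and the leaf set), so $H$ is a star forest of $G$ throughout. The key point enabling termination of Phases 2 and 3 is a monotone progress measure: I would track $\sum_{S \text{ star of } H} f(|S|)$ for a suitably chosen convex weighting $f$, or more simply the multiset of star sizes under a lexicographic order — flipping a type-II or type-III alternating path strictly increases the size of the large endpoint star $C_k$ while shrinking the small-end stars toward larger sizes (an $S_1$ becomes $S_2$, etc.), so some explicit potential strictly decreases. This also bounds the number of iterations polynomially since all sizes are at most $|V|$.

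Next I would verify the colorings. For the I-coloring: $A_1 := I$ is by construction a maximum independent set of $G$, and since $I$ is independent in $G \supseteq H$ and $H$ is a star forest, $A_1 \cap S$ is independent in each star $S$; it is a \emph{maximum} independent set of $S$ because a star on $t$ vertices has independence number $t-1$ (all leaves) or $1$ ($S_1$), and $I$ must contain all leaves of any star $S_\ell$ with $\ell \ge 2$ — a leaf not in $I$ would have to be covered, i.e. lie in $U$, but then it would be a centre, contradiction — while for $S_1$, $I$ contains exactly one of the two vertices since $M$ is maximum (an edge with both endpoints outside $I$ cannot occur as $I$ is a maximum independent set whose complement is a minimum vertex cover, and by König's theorem the matched pair contributes exactly one cover vertex). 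For the II-coloring: the algorithm puts the leaves of every $S_{\ell \ge 3}$ into $A_2$, the vertices of every $S_1$ into $B_2$, and runs a closure loop pulling the leaves of an $S_2$ into $A_2$ whenever its centre is $G$-adjacent to a current $A_2$ vertex, finally dumping the untouched $S_2$'s into $B_2$. I need: (i) $A_2$ is independent in $G$ and non-adjacent to $B_2$; (ii) the coverage conditions hold. Condition (ii) holds by construction once the loop terminates. For (i): the only way $A_2$ could fail is a $G$-edge from a leaf in $A_2$ (of some $S_{\ell\ge 3}$ or promoted $S_2$) to another $A_2$ or $B_2$ vertex. A $G$-edge between a leaf of one star and the centre of another star of the appropriate types is exactly the skeleton of an alternating path of type~II — and \emph{this is where the absence of type-II alternating paths (guaranteed by Phase~2, and preserved through Phase~3 by \cref{lem:augmenting3to2}) is essential}: such an edge would chain backwards through a sequence of $S_2$'s to an $S_1$, producing a forbidden path. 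Leaf-to-leaf $G$-edges and centre-to-centre issues I would rule out using bipartiteness and the structure of the closure (a centre pulled into consideration has its leaves added, so a problematic adjacency always traces to a centre). The III-coloring argument is completely parallel, now invoking the absence of type-III alternating paths from Phase~3, with the base case being an $S_2$ (whose vertices are already committed to $B_3$) instead of an $S_1$.

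The main obstacle I expect is the termination-with-polynomial-bound argument for Phases~2 and~3 together with the correctness of the closure loops in the coloring phase — specifically, pinning down exactly why a "bad" $G$-edge incident to $A_2$ (resp. $A_3$) forces a full alternating path of the forbidden type rather than just a short configuration. The clean way is: suppose a leaf $x \in A_2$ lies in star $C$ and has a $G$-neighbour $y$ that is either in $A_2$ or in $B_2$. If $x \in A_2$ because $C \simeq S_{\ell \ge 3}$, then since $G$ is bipartite and $x$ is a leaf, $y$ is a centre of some other star $C'$; tracing how $y$'s star got classified (it was not put in $A_2$, nor is $C'$ an $S_{\ell\ge 3}$, so $C'$ is an $S_1$ or an $S_2$ not promoted) gives the start of a type-II path, and if $C'$ is a promoted $S_2$ we follow the promotion chain — each link is a leaf-centre $G$-edge between consecutive stars — back to the $S_1$ that started it, assembling the full forbidden path; contradiction with Phase~2. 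The symmetric-difference lemmas (\cref{observation:augmenting}, \cref{lem:augmenting3to2}) do the bookkeeping that keeps Phase~3 from reintroducing type-II paths. Polynomiality of each individual phase is routine (maximum matching, maximum independent set in bipartite graphs via König, and each while-loop runs at most $O(|V|^2)$ times with each iteration searching for an alternating path or a violating edge in polynomial time).
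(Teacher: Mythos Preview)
Your approach mirrors the paper's, but two concrete steps are wrong as written.

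First, the termination argument for Phases~2 and~3: flipping an alternating path does \emph{not} increase the size of the large endpoint star $C_k$; it decreases it ($C_k$ loses the leaf $v_{2k-2}$), while the small endpoint $C_1$ gains a leaf. So the potential you describe moves the wrong way. The clean measure --- and what the paper uses --- is simply the number of $S_1$-components, which strictly decreases with each type-II flip ($C_1$ goes from $S_1$ to $S_2$, and no new $S_1$ appears since $C_k$ goes from $S_\ell$ to $S_{\ell-1}$ with $\ell-1\ge 2$). Likewise, the number of $S_2$-components strictly decreases with each type-III flip. This gives a linear bound on the number of iterations; no lexicographic or convex potential is needed.

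Second, the step ``since $G$ is bipartite and $x$ is a leaf, $y$ is a centre of some other star'' is the wrong reason: bipartiteness of $G$ says nothing about whether a $G$-neighbour of a leaf of $H$ is a centre or a leaf, because the centre/leaf partition of $H$ is in general \emph{not} the bipartition of $G$. The correct argument --- which you in fact already set up in the I-coloring paragraph --- is that the leaf set of $H$ equals $I$ (Phase~1 makes $U$ the centres and $I$ the leaves, and \cref{observation:augmenting} preserves the leaf set through Phases~2 and~3). Hence the leaves form an independent set of $G$, so if $x$ is a leaf and $xy\in E(G)$ then $y\notin I$ and $y$ must be a centre. This is also what rules out leaf--leaf $G$-edges (and thus shows $A_2\subseteq I$ is independent in $G$), not bipartiteness. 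With this fix, your ``trace back the promotion chain to build a forbidden type-II path'' argument is exactly the paper's, and the III-coloring is indeed parallel.
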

\begin{proof}
Let $M$ be a maximum matching and $I$ be a maximum independent set of $G$. Both can be found in polynomial time using the maximum flow algorithm to find a maximum matching in bipartite graphs  \cite[Theorem 10.5]{KV18}. Observe that the complement of a maximum independent set is a minimum vertex cover $U:=V\setminus I$.
By K\H{o}nig's Theorem~\cite{K31}, it holds that $|U|=|M|$.  In particular, every edge in $M$ contains exactly one vertex of $U$ and every vertex in $V' := V\setminus\bigcup_{e \in M} e$ is not contained in $U$. Therefore, for every $v\in V'$, there exists $u\in U$ such that $\{u,v\}\in E$, i.e., line 7 in Phase 1 is well-defined. 
It remains to show that $(V,E')$ is a star forest at the end of Phase 1. To this end, note that every edge of $E'$ is incident to exactly one vertex $u\in U$. 
Thus, every vertex in $U$ is the center of a star (on at least 2 vertices).
Moreover, by \cref{observation:augmenting}, modifying the star forest $(V,E')$  along an alternating path in Phase 2 and 3 results again in a star forest. Thus, the algorithm returns a star forest (if it terminates).

For the runtime, it is important to observe that no new $S_1$ is created in Phase 2 and no new  $S_2$ is created in Phase 3. Therefore, the number of iterations in Phase 2 is bounded by the number of $S_1$'s after Phase 1 and in Phase 3 by the number of $S_2$'s after Phase 2. An alternating path of type II and III can also be found in polynomial time by fixing a $S_{1}$ or $S_{2}$, respectively, and use breadth first search. Because Phase 4 clearly runs in polynomial time as well, \cref{alg:decompose} terminates in polynomial time.

By~\cref{lem:augmenting3to2} and the fact that no alternating path of type II exists after Phase 2, after the end of Phase 3, there exists neither an alternating path of type II nor one of type III. We will exploit this fact now to prove correctness of the colorings.

By definition, $A_1:=I$ is a maximum independent set of $G$. We argue that $A_1$ is also an maximum 1-IS of $H$. Note that while $U$ constitutes the centers of the stars, $V/U=I=A_1$ are the leaves of the stars after Phase 1. Hence the claim is true after phase 1.
 \cref{observation:augmenting} ensures that this property is maintained in Phase 2 and 3.
 Hence, $A_1$ is a I-coloring.
 
For the type II-coloring, the algorithm ensures that $A_2$ contains the leaves of all $S_{\ell}$  with $\ell\geq 3$ and that $B_2$ contains  all vertices of $S_1$. Hence, we only need to pay attention to $S_2$'s. The algorithm inserts the leaves of stars $S_{2}$ into $A_2$ if their center is adjacent to a vertex in $A_2$, as long as it is possible. The vertices of all remaining $S_2$'s are inserted into $B_2$.
Thus, the properties (i), (ii) and (iii) of a II-coloring are fulfilled by $(A_2,B_2)$. It only remains to show that no vertex of $A_2$ is adjacent to another vertex of $A_2\cup B_2$ in $G$. Because $A_2\subset I$, no two vertices in $A_2$ are adjacent in $G$. Suppose there is a vertex $a\in A_2$ adjacent in $G$ to a vertex $b\in B_2$. 
Let $S^a$ and $S^b$ be the star containing $a$ and $b$, respectively.
By construction $a$ is a leave of $S^a$. Moreover,  $b\in U$; otherwise $a,b\in A_1$, a contradiction to the fact that  $A_1$ is an independent set.
If $S^b\simeq S_2$, then the center $b$ is adjacent to vertex $a\in A_2$,  and the algorithm ensures that the leaves of $S^b$ are contained in $A_2$, and hence, $b\notin B_2$. Thus, $S^b\simeq S_1$. 

If $S^a\simeq S_\ell,\ell\geq 3$, then there exists an alternating path of type II starting in $b$ and ending in the center of $S^a$, see \cref{fig:alterPath} (left). A contradiction. 

If $S^a\simeq S_2$, the fact $a\in A_2$ implies that there exists a star $S_\ell$, $\ell\geq 2$, with a leaf  adjacent to the center  of $S^a$. If $\ell=2$, there exists a further star whose leaf is adjacent to the considered star, see \cref{fig:alterPath} (right). Repeating this argument, the containment of $a\in A_2$ can be traced back to a star $S_\ell,\ell\geq 3$, and yields an alternating path of type II, a contradiction.

\begin{figure}[htb]
	\centering
	\includegraphics[page=7]{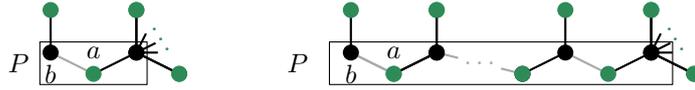}
	\caption{If vertices $a\in A_2$ and $b\in B_2$ are adjacent in $G$, there exists an alternating path of type~II.
	}
	\label{fig:alterPath}
\end{figure}

With arguments similar to the above, $(A_3,B_3)$ is a III-coloring; otherwise we find an alternating path of type III. Specifically, one can show that vertex $a$ belongs to a star $S_2$ and $b$ to a star $S_3$.
\end{proof}

\section{Conclusion}
We introduce the concept of scheduling with machine conflicts 
and hereby generalize the classical makespan minimization problem on identical parallel machines. We show inapproximability results and approximation algorithms for identical jobs. Moreover, we present a polynomial time algorithm for the case of unit jobs on bipartite graphs. This also induces a local optimality criterion for general graphs.

Various interesting avenues remain open for future research. In particular, the investigation of graph classes capturing geometric information is of special interest for applications in which spatial proximity implies  machines conflicts. Furthermore, closer investigation of  special cases of identical jobs and special graph classes can lead to a better understanding of structures of optimal schedules. Additionally, variants with preemption constitute natural interesting  directions for future research. We are positive that  our insights and tools prove to be useful.


\bibliography{./../bibliography}
%

\end{document}